\newcommand{\distas}[1]{\mathbin{\overset{#1}{\kern\z@\sim}}}%
\newsavebox{\mybox}\newsavebox{\mysim}
\newtheorem{theorem}{Theorem}[section]
\newtheorem{lemma}[theorem]{Lemma}
\theoremstyle{definition}
\newtheorem{definition}{Definition}[section]
\newtheorem{remark}[theorem]{Remark}
\newcommand{\distras}[1]{%
  \savebox{\mybox}{\hbox{\kern3pt$\scriptstyle#1$\kern3pt}}%
  \savebox{\mysim}{\hbox{$\sim$}}%
  \mathbin{\overset{#1}{\kern\z@\resizebox{\wd\mybox}{\ht\mysim}{$\sim$}}}%
}
\newcolumntype{C}[1]{>{\centering\let\newline\\\arraybackslash\hspace{0pt}}m{#1}}
\newcommand{\blind}{1}
\begin{document}

\def\spacingset#1{\renewcommand{\baselinestretch}%
{#1}\small\normalsize} \spacingset{1.3}


\if1\blind
{
	\centering{\bf\Large A Generalized Gaussian Process Model for Computer Experiments with Binary Time Series}\\
	  \vspace{0.2in}
	  \centering{Chih-Li Sung$^{a}$ $^{1}$, Ying Hung$^{b}$ \footnote{Joint first authors.},
	  William Rittase$^c$, Cheng Zhu$^c$,\\ C. F. J. Wu$^{a}$ \footnote{Corresponding author.} }\\
	      \vspace{0.2in}
	  \centering{$^a$School of Industrial and Systems Engineering, Georgia Institute of Technology\\
	  $^b$Department of Statistics, Rutgers, the State University of New Jersey\\
	  $^c$Department of Biomedical Engineering, Georgia Institute of Technology\\}
} \fi

\if0\blind
{
  \bigskip
  \bigskip
  \bigskip
  \begin{center}
    {\LARGE\bf A Generalized Gaussian Process Model for Computer Experiments with Binary Time Series}
\end{center}
  \medskip
} \fi

\bigskip
\begin{abstract}

Non-Gaussian observations such as binary responses are common in some computer experiments. Motivated by the analysis of a class of cell adhesion experiments, we introduce a generalized Gaussian process model for binary responses, which shares some common features with standard GP models. In addition, the proposed model incorporates a flexible mean function that can capture different types of time series structures. Asymptotic properties of the estimators are derived, and an optimal predictor as well as its predictive distribution are constructed. Their performance is examined via two simulation studies. The methodology is applied to study computer simulations for cell adhesion experiments. The fitted model reveals important biological information in repeated cell bindings, which is not directly observable in lab experiments.
 
\end{abstract}

\noindent%
{\it Keywords:}  Computer experiment, Gaussian process model, Single molecule experiment, Uncertainty quantification
\vfill

\newpage
\spacingset{1.45} 

\section{Introduction}

Cell adhesion plays an important role in many physiological and pathological processes.
This research is motivated by the analysis of a class of cell adhesion experiments called micropipette adhesion frequency assays, which is a method for measuring the kinetic rates between molecules in their native membrane environment. 
In a micropipette adhesion frequency assay, a red blood coated in a specific ligand is brought into contact with cell containing the native receptor for a predetermined duration, then retracted. The output of interest is binary, indicating whether a controlled contact results in adhesion. If there is an adhesion between molecules at the end of contact, retraction will stretch the red cell. If no adhesion resulted, the red cell will not be stretched. The kinetics of the molecular interaction can be derived through many repeated trials. In theory, these contacts should be independent Bernoulli trials.
However, there is a memory effect in the repeated tests and the quantification of such a memory effect is scientifically important \citep{Zarnitsyna2007PNAS, hung2012binary}.  
 


A cost-effective way to study the repeated adhesion frequency assays is through computer experiments, which study real systems using complex mathematical models and numerical tools such as finite element analysis \citep{santner2003design}. They have been widely used as alternatives to physical experiments or observations, especially for the study of complex systems. 
For cell adhesion, performing physical experiments (i.e., lab work) is time-consuming and often involves complicated experimental manipulation. Therefore, instead of performing the experiments only based on the actual lab work, computer simulations based on mathematical models are conducted to provide an efficient way to examine the complex mechanisms behind the adhesion.

The analysis of computer experiments has three objectives: (i) to build a model that captures the nonlinear relationship between inputs and outputs; (ii) to estimate the unknown parameters in the model and deduce properties of the estimators; (iii) to provide an optimal predictor for untried input settings, also called ``emulator'' or ``surrogate model'', and quantify its predictive uncertainty \citep{sacks89, santner2003design}. This objective (iii) is crucial because computer simulations are generally expensive or time-consuming to perform and therefore the emulators based on computer simulations are used as surrogates to perform sensitivity analysis, process optimization, calibration, etc. In particular, it is critical for calibration problems in which the emulators and physical experiments are integrated so that some unknown calibration parameters can be estimated. In the literature, Gaussian process (GP) model, use of which achieves the three objectives, is widely used for the analysis of computer experiments. A GP model accommodates nonlinearity using GP and provides an optimal predictor with an interpolation property. The applications of GP can be found in many fields in science and engineering. 

The conventional GP models are developed for continuous outputs with a Gaussian assumption, which does not hold in some scientific studies. For example, the focus of the cell adhesion frequency assays is to elicit the relationship between the setting of kinetic parameters/covariates and the adhesion score, which is binary. For binary outputs, the Gaussian assumption is not valid and GP models cannot be directly applied. Binary outputs are common in computer experiments, but the extensions of GP models to non-Gaussian cases have received scant attention in computer experiment literature. Although there are intensive studies of generalized GP models for non-Gaussian data in machine learning and spatial statistics literature, such as \cite{williams1998bayesian}, \cite{zhang2002estimation}, \cite{rasmussen2006gaussian}, \cite{nickisch2008approximations} and \cite{wang2014generalized}, the asymptotic properties of estimators have not been systematically studied. Moreover, an analogy to the GP predictive distribution for binary data is important for uncertainty quantification in computer experiments, which has not yet been developed to the best of our knowledge.

Apart from the non-Gaussian responses, analysis of the repeated cell adhesion frequency assays poses another challenge, namely, how to incorporate a time series structure with complex interaction effects. It was discovered that cells appear to have the ability to remember the previous adhesion events and such a memory has an impact on the future adhesion behaviors \citep{Zarnitsyna2007PNAS, hung2012binary}. 
The quantification of the memory effect and how it interacts with the settings of the kinetic parameters in the binary time series are important but cannot be obtained by direct application of the conventional GP models.
To consider the time series structure, a common practice is to construct a spatial-temporal model. However, a separable correlation function (e.g., \cite{gelfand2004nonstationary,conti2010bayesian}) in which space and time are assumed to be independent is often implemented as a convenient way to address the computational issue. As a result, the estimation of interaction between space and time, which is of major interest here, is not allowed for. Even in the cases where nonseparable correlation functions (e.g., \cite{gelfand2004nonstationary,fricker2013multivariate}) are implemented, the interaction effect is still not easily interpretable. Therefore, a new model that can model binary time series and capture interaction effects is called for.


To achieve the objectives in the analysis of computer experiments and overcome the aforementioned limitations with binary time series outputs, we introduce a new class of models in this article. The idea is to generalize GP models to non-Gaussian responses and incorporate a flexible mean function that can estimate the time series structure and its interaction with the input variables. In particular, we focus on binary responses and introduce a new model which is analogous to the GP model with an optimal interpolating predictor. Rigorous studies of estimation, prediction, and inference are required for the proposed model and the derivations are complicated by the nature of binary responses and the dependency of time series. Since binary responses with serial correlations can be observed in computer experiments, the proposed method can be readily applicable to other fields beyond cell biology. For example, in manufacturing industry computer simulations are often conducted for the failure analysis where the outputs of interest are binary, i.e., failure or success \citep{yan2009reliability}. Examples can also be found in other biological problems where binary outputs are observed and evolve in time, such as neuron firing simulations, cell signaling pathways, gene transcription, and recurring diseases \citep{gerstner1997neural,mayrhofer2002devs}. 
The proposed method can also be broadly applied beyond computer experiments. In many scientific experiments, such as medical research and social studies, binary repeated measurements are commonly observed with serial correlations. In these situations, the proposed method can be implemented to provide a flexible nonlinear model that quantifies the correlation structure and explains the complex relationship between inputs and binary outputs. More examples can be found in functional data analysis, longitudinal data analysis, and machine learning.

The remainder of this article is organized as follows. The new class of models is discussed in Section 2. In Section 3 and 4,
asymptotic properties of the estimators are derived and the predictive distributions are constructed. Finite sample performance is demonstrated by simulations in Section 5. In Section 6, the proposed method is illustrated with the analysis of computer experiments for cell adhesion frequency assays. Concluding remarks are given in Section 7. Mathematical proofs and algorithms are provided in the online supplement. An implementation for our method can be found in \texttt{binaryGP} \citep{CL2017} in \texttt{R} \citep{R}.
 
\section{Model}
\subsection{Generalized Gaussian process models for binary response}\label{sec:simplemodel}
We first introduce a model for binary responses in computer experiments which is analogous to the conventional GP models for continuous outputs. Suppose a computer experiment has a $d$-dimensional input setting $\mathbf{x}=(x_1,\ldots,x_d)'$ and for each setting the binary output is denoted by $y(\mathbf{x})$ and randomly generated from a Bernoulli distribution with probability $p(\mathbf{x})$. Using a logistic link function, the Gaussian process model for binary data can be written as
\begin{equation}\label{eq:simplemodel}
\text{logit}(p(\mathbf{x}))=\alpha_0+\mathbf{x}'\boldsymbol{\alpha}+Z(\mathbf{x}),
\end{equation}
where $p(\mathbf{x})=\mathbb{E}[y(\mathbf{x})]$, $\alpha_0$ and $\boldsymbol{\alpha}=(\alpha_1,\ldots,\alpha_d)'$ are the intercept and linear effects of the mean function of $p(\mathbf{x})$, and $Z(\cdot)$ is a zero mean Gaussian process with variance $\sigma^2$, correlation function $R_{\boldsymbol{\theta}}(\cdot,\cdot)$, and unknown correlation parameters $\boldsymbol{\theta}$. 

Various choices of correlation functions have been discussed in the literature. For example, the \textit{power exponential correlation function} is commonly used in the analysis of computer experiments \citep{santner2003design}:
\begin{equation}\label{eq:powercorrelationfunction}
R_{\boldsymbol{\theta}}(\mathbf{x}_i,\mathbf{x}_j)=\exp\left\{-\sum^d_{l=1}\frac{(x_{il}-x_{jl})^p}{\theta_l}\right\},
\end{equation}
where $\boldsymbol{\theta}=(\theta_1,\ldots,\theta_d)$, the power $p$ controls the smoothness of the output surface, and the parameter $\theta_l$ controls the decay of correlation with respect to the distance between $x_{il}$ and $x_{jl}$. Recent studies have shown that a careful selection of the correlation function, such as orthogonal Gaussian processes proposed by \cite{plumlee2016orthogonal}, can resolve the identifiability issue in the estimation of Gaussian process models \citep{hodges2010adding,paciorek2010importance,tuo2015calibration}. 
This is particularly important in the application of calibration problems where the parameter estimation plays a significant role. 
Depending on the objectives of the studies, different correlation functions can be incorporated into the proposed model and the  theoretical results developed herein remain valid.  

Similar extensions of GP models to binary outputs have been applied in many different fields. For example, when $\mathbf{x}$ represents a two-dimensional spatial domain, (\ref{eq:simplemodel}) becomes the spatial generalized linear mixed model proposed by \cite{zhang2002estimation}. In a Bayesian framework, Gaussian process priors are implemented for classification problems, such as in \cite{williams1998bayesian} and \cite{gramacy2011particle}. Despite successful applications of these models, theoretical studies on the estimation and prediction properties are not available. Therefore, one focus of this paper is to provide theoretical supports for the estimation and prediction in (\ref{eq:simplemodel}).

\subsection{Generalized Gaussian process models for binary time series}

In this section, we introduce a new model for the analysis of computer experiments with binary time series, which is an extension of (\ref{eq:simplemodel}) that takes serial correlations between binary observations into account.
Suppose for each setting of a computer experiment, a sequence of \textit{binary time series} outputs $\{y_t(\mathbf{x})\}^T_{t=1}$ is randomly generated from Bernoulli distributions with probabilities $\{p_t(\mathbf{x})\}^T_{t=1}$. A generalized Gaussian process model for binary time series can be written as:
\begin{equation}\label{eq:model_GPBiTS}
\text{logit}(p_{t}(\mathbf{x}))=\eta_{t}(\mathbf{x})=\sum^R_{r=1}\varphi_ry_{t-r}(\mathbf{x})+\alpha_0+\mathbf{x}'\boldsymbol{\alpha}+\sum^L_{l=1}\boldsymbol{\gamma}_l\textbf{x}y_{t-l}(\mathbf{x})+Z_t(\mathbf{x}),
\end{equation}
where $p_t(\mathbf{x})=\mathbb{E}[y_{t}(\mathbf{x})|H_{t}]$ is the conditional mean given the previous information $H_{t}=\{y_{t-1}(\mathbf{x}),y_{t-2}(\mathbf{x}),\ldots\}$.
In model (\ref{eq:model_GPBiTS}), $\{\varphi_r\}^R_{r=1}$ represents an autoregressive (AR) process with order $R$ and $\boldsymbol{\alpha}=(\alpha_1,\ldots,\alpha_d)'$ represents the effects of $\mathbf{x}$. The $d$-dimensional vector $\boldsymbol{\gamma}_l$ represents the interaction between the input and the past outputs and provides the flexibility of modeling different time series structures with different inputs. Given that the interactions between $\mathbf{x}$ and time are captured by $\textbf{x}y_{t-l}$, $Z_t$ is assumed to vary independently over time to reduce modeling and computational complexity. Further extensions can be made by replacing $Z_t(\mathbf{x})$ with a spatio-temporal Gaussian process $Z(t,\mathbf{x})$, but the computational cost will be higher.  
Without the Gaussian process assumption in (\ref{eq:model_GPBiTS}), the mean function is closely related to the Zeger-Qaqish (\citeyear{zeger1988markov}) model and its extensions in \cite{hung2012binary} and \cite{benjamin2003generalized}, all of which take into account the autoregressive predictors in logistic regression.


Model (\ref{eq:model_GPBiTS}) extends the applications of conventional GP to binary time series generated from computer experiments. The model is intuitively appealing; however, the issues of estimation, prediction, and inference are not straightforward due to the nature of binary response and the dependency structure. 

\section{Inference}
Since model \eqref{eq:simplemodel} can be written as a special case of model \eqref{eq:model_GPBiTS} when $R=0,L=0$ and $T=1$, derivations herein are mainly based on model \eqref{eq:model_GPBiTS} with additional discussions given for \eqref{eq:simplemodel} when necessary.  

\subsection{Estimation}\label{sec:estimation}
Given $n$ input settings $\mathbf{x}_1,\ldots,\mathbf{x}_n$ in a computer experiment, denote $y_{it}\equiv y_t(\mathbf{x}_i)$ as the binary output generated from input $\mathbf{x}_i$ at time $t$, where $\mathbf{x}_i\in \mathbb{R}^d$, $i=1,...,n$, and $t=1,...,T$. Let $N$ be the total number of the outputs, i.e., $N=nT$. In addition, at each time $t$, denote $\mathbf{y}_t$ as an $n$-dimensional vector $\mathbf{y}_t=(y_{1t},...,y_{nt})'$ with conditional mean $\boldsymbol{p}_{t}=(p_{1t},\ldots,p_{nt})'$, where $p_{it}=\mathbb{E}(y_{it}|H_{it})$ and $H_{it}=\{y_{i,t-1},y_{i,t-2},\ldots\}$.  Based on the data, model \eqref{eq:model_GPBiTS} can be rewritten into matrix form as follows:
\begin{equation}\label{eq:modelwithdata}
\text{logit}(\boldsymbol{p})=\boldsymbol{X}\boldsymbol{\beta}+\boldsymbol{Z},\quad \boldsymbol{Z}\sim\mathcal{N}(\mathbf{0}_N,\Sigma(\boldsymbol{\omega})),
\end{equation}
where $\boldsymbol{p}=(\boldsymbol{p}'_1,\ldots,\boldsymbol{p}'_T)',\boldsymbol{\beta}=(\varphi_1,\ldots,\varphi_R,\alpha_0,\boldsymbol{\alpha}',(\boldsymbol{\gamma}_1',\ldots,\boldsymbol{\gamma}_L')')',\boldsymbol{\omega}=(\sigma^2,\boldsymbol{\theta})',\\
\boldsymbol{Z}=(Z_1(\mathbf{x}_1),\ldots,Z_1(\mathbf{x}_n),\ldots,Z_T(\mathbf{x}_1),\ldots,Z_T(\mathbf{x}_n))'$,  $\boldsymbol{X}$ is the model matrix $(X'_1,\ldots,X'_T)'$, $X_t$ is an $n\times(1+R+d+dL)$ matrix with $i$-th row defined by
$(X_t)_i=(1,y_{i,t-1},\ldots,y_{i,t-R},\mathbf{x}_i',\\\mathbf{x}'_iy_{i,t-1},\ldots,\mathbf{x}'_iy_{i,t-L})$, and 
$\Sigma(\boldsymbol{\omega})$ is an $N\times N$ covariance matrix defined by
\begin{equation}\label{eq:Sigma}
\Sigma(\boldsymbol{\omega})=\sigma^2\boldsymbol{R}_{\boldsymbol{\theta}}\otimes I_T
\end{equation}
with $(\boldsymbol{R}_{\boldsymbol{\theta}})_{ij}=R_{\boldsymbol{\theta}}(\mathbf{x}_i,\mathbf{x}_j)$. Model \eqref{eq:simplemodel} can also be rewritten in the same way by setting $R=0,L=0$ and $T=1$.
 
With the presence of time series and their interaction with the input settings in model (\ref{eq:model_GPBiTS}), we can write down the partial likelihood (PL) function \citep{cox1972partial,cox1975partial} according to the formulation of \cite{slud1994partial}. Given the previous information $\{H_{it}\}_{i=1,\ldots,n;t=1,\ldots,N}$, the PL for  $\boldsymbol{\beta}$ can be written as
\begin{equation}\label{eq:fixedeffectpartiallikelihood}
PL(\boldsymbol{\beta}|\boldsymbol{Z})=\prod^n_{i=1}\prod^T_{t=1}(p_{it}(\boldsymbol{\beta}|\boldsymbol{Z}))^{y_{it}}(1-p_{it}(\boldsymbol{\beta}|\boldsymbol{Z}))^{1-y_{it}},
\end{equation}
where $p_{it}(\boldsymbol{\beta}|\boldsymbol{Z})=\mathbb{E}_{\boldsymbol{\beta}|\boldsymbol{Z}}[y_{it}|H_{it}]$. Then, the integrated quasi-PL function for the estimation of $(\boldsymbol{\beta},\boldsymbol{\omega})$ is given by
\begin{equation}\label{eq:fixedeffectquasipartiallikelihood}
|\Sigma(\boldsymbol{\omega})|^{-1/2}\int\exp\{\log PL(\boldsymbol{\beta}|\boldsymbol{Z})-\frac{1}{2}\boldsymbol{Z}'\Sigma(\boldsymbol{\omega})^{-1}\boldsymbol{Z} \}d\boldsymbol{Z}.
\end{equation}
Note that, for model \eqref{eq:simplemodel} where no time series effect is considered, \eqref{eq:fixedeffectpartiallikelihood} and \eqref{eq:fixedeffectquasipartiallikelihood} should be replaced by the likelihood function 
\begin{equation*}
L(\boldsymbol{\beta}|\boldsymbol{Z})=\prod^n_{i=1}(p_{i1}(\boldsymbol{\beta}|\boldsymbol{Z}))^{y_{i1}}(1-p_{i1}(\boldsymbol{\beta}|\boldsymbol{Z}))^{1-y_{i1}}
\end{equation*}
and the integrated quasi-likelihood function
\begin{equation}\label{eq:fixedeffectquasilikelihood}
|\Sigma(\boldsymbol{\omega})|^{-1/2}\int\exp\{\log L(\boldsymbol{\beta}|\boldsymbol{Z})-\frac{1}{2}\boldsymbol{Z}'\Sigma(\boldsymbol{\omega})^{-1}\boldsymbol{Z} \}d\boldsymbol{Z},
\end{equation}
respectively. Hereafter, we provide the framework for the integrated quasi-PL function \eqref{eq:fixedeffectquasipartiallikelihood}, but the result can be applied to the integrated quasi-likelihood function \eqref{eq:fixedeffectquasilikelihood} by assuming $R=0,L=0$ and $T=1$.

Because of the difficulty in computing the integrated quasi-PL function, a \textit{penalized quasi-PL} (PQPL) function is used as an approximation. Similar to the procedure in \cite{breslow1993approximate}, the integrated quasi-partial log-likelihood can be approximated by Laplace's method \citep{barndorff1989asymptotic}. Ignoring the multiplicative constant and plugging \eqref{eq:Sigma} in $\Sigma(\boldsymbol{\omega})$, the approximation yields
\begin{align}\label{eq:PQPL}
\nonumber
-\frac{1}{2}\log|I_n&+\sigma^2\boldsymbol{W}(\boldsymbol{R}_{\boldsymbol{\theta}}\otimes I_T)|+\\
&\sum^n_{i=1}\sum^T_{t=1}\left(y_{it}\log\frac{p_{it}(\boldsymbol{\beta}|\tilde{\boldsymbol{Z}})}{1-p_{it}(\boldsymbol{\beta}|\tilde{\boldsymbol{Z}})}+\log(1-p_{it}(\boldsymbol{\beta}|\tilde{\boldsymbol{Z}}))\right)-\frac{1}{2\sigma^2}\tilde{\boldsymbol{Z}}'(\boldsymbol{R}_{\boldsymbol{\theta}}\otimes I_T)^{-1}\tilde{\boldsymbol{Z}},
\end{align}
where $\boldsymbol{W}$ is an $N\times N$ diagonal matrix with diagonal elements $W_{it}=p_{it}(\boldsymbol{\beta}|\tilde{\boldsymbol{Z}})(1-p_{it}(\boldsymbol{\beta}|\tilde{\boldsymbol{Z}}))$, $p_{it}(\boldsymbol{\beta}|\tilde{\boldsymbol{Z}})=\mathbb{E}_{\boldsymbol{\beta}|\tilde{\boldsymbol{Z}}}[y_{it}|H_{it}]$, and $\tilde{\boldsymbol{Z}}=\tilde{\boldsymbol{Z}}(\boldsymbol{\beta},\boldsymbol{\omega})$ is the solution of
$
\sum^n_{i=1}\sum^T_{t=1}\mathbf{e}_{it}(y_{it}-p_{it}(\boldsymbol{\beta}|\boldsymbol{Z}))=(\boldsymbol{R}_{\boldsymbol{\theta}}\otimes I_T)^{-1}\boldsymbol{Z}/\sigma^2
$, where $\mathbf{e}_{it}$ is a unit-vector where $((t-1)n+i)$-th element is one. The estimator $\hat{\boldsymbol{\beta}}$ which maximizes the PQPL function \eqref{eq:PQPL} is called \textit{maximum quasi-PL estimator}. Thus, similar to the derivations in \cite{breslow1993approximate} for score equations of a penalized quasi-likelihood function, the score equations of the PQPL function for $\boldsymbol{\beta}$ and $\boldsymbol{\omega}$ are 
$$
\sum^n_{i=1}\sum^T_{t=1}X_{it}(y_{it}-p_{it}(\boldsymbol{\beta},\boldsymbol{\omega}))=0
$$
and 
$$
\sum^n_{i=1}\sum^T_{t=1}\mathbf{e}_{it}(y_{it}-p_{it}(\boldsymbol{\beta},\boldsymbol{\omega}))=(\boldsymbol{R}_{\boldsymbol{\theta}}\otimes I_T)^{-1}\boldsymbol{Z}/\sigma^2,
$$
where $p_{it}(\boldsymbol{\beta},\boldsymbol{\omega})=\mathbb{E}_{\boldsymbol{\beta},\boldsymbol{\omega}}[y_{it}|H_{it}]$. The solution to the score equations can be efficiently obtained by an iterated weighted least squares (IWLS) approach as follows. In each step, one first solves for $\boldsymbol{\beta}$ in 
\begin{equation}\label{eq:estimation1}
(\boldsymbol{X}'\boldsymbol{V}(\boldsymbol{\omega})^{-1}\boldsymbol{X})\boldsymbol{\beta}=\boldsymbol{X}'\boldsymbol{V}(\boldsymbol{\omega})^{-1}\tilde{\boldsymbol{\eta}},
\end{equation}
where $\boldsymbol{V}(\boldsymbol{\omega})=\boldsymbol{W}^{-1}+\sigma^2(\boldsymbol{R}_{\boldsymbol{\theta}}\otimes I_T)$, $\boldsymbol{W}$ is an $N\times N$ diagonal matrix with diagonal elements $W_{it}=p_{it}(\boldsymbol{\beta},\boldsymbol{\omega})(1-p_{it}(\boldsymbol{\beta},\boldsymbol{\omega}))$, and $\tilde{\eta}_{it}=\log\frac{p_{it}(\boldsymbol{\beta},\boldsymbol{\omega})}{1-p_{it}(\boldsymbol{\beta},\boldsymbol{\omega})}+\frac{y_{it}-p_{it}(\boldsymbol{\beta},\boldsymbol{\omega})}{p_{it}(\boldsymbol{\beta},\boldsymbol{\omega})(1-p_{it}(\boldsymbol{\beta},\boldsymbol{\omega}))}$, and then sets 
\begin{equation}\label{eq:estimation2}
\hat{\boldsymbol{Z}}=\sigma^2(\boldsymbol{R}_{\boldsymbol{\theta}}\otimes I_T)\boldsymbol{V}(\boldsymbol{\omega})^{-1}(\tilde{\boldsymbol{\eta}}-\boldsymbol{X}'\hat{\boldsymbol{\beta}})
\end{equation}
and replaces $p_{it}(\boldsymbol{\beta},\boldsymbol{\omega})$ with $p_{it}(\hat{\boldsymbol{\beta}},\boldsymbol{\omega})=\left(\frac{\exp\{\boldsymbol{X}'\hat{\boldsymbol{\beta}}+\hat{\boldsymbol{Z}}\}}{\mathbf{1}_N+\exp\{\boldsymbol{X}'\hat{\boldsymbol{\beta}}+\hat{\boldsymbol{Z}}\}}\right)_{it}$.

Estimation of the correlation parameters $\boldsymbol{\theta}$ and variance $\sigma^2$ is obtained by the restricted maximum likelihood (REML) approach \citep{patterson1971recovery} because it is known to have smaller bias comparing with the maximum likelihood approach \citep{patterson1974maximum}. See also \cite{harville1977maximum} and \cite{searle2009variance} for details. According to \cite{harville1974bayesian,harville1977maximum}, the REML estimators of $\sigma^2$ and $\boldsymbol{\theta}$ can be solved by minimizing the following negative log-likelihood function with respect to $\boldsymbol{\omega}$,
\begin{equation}\label{eq:estimation3}
L(\boldsymbol{\omega})=\frac{N-m}{2}\log(2\pi)-\frac{1}{2}\log(|\boldsymbol{X}'\boldsymbol{X}|)+\frac{1}{2}\log(|\boldsymbol{V}(\boldsymbol{\omega})|)+\frac{1}{2}\log(|\boldsymbol{X}'\boldsymbol{V}(\boldsymbol{\omega})^{-1}\boldsymbol{X}|)+\frac{1}{2}\tilde{\boldsymbol{\eta}}'\varPi(\boldsymbol{\omega})\tilde{\boldsymbol{\eta}},
\end{equation}
where $m=1+R+d+dL$ and  $\varPi(\boldsymbol{\omega})=\boldsymbol{V}(\boldsymbol{\omega})^{-1}-\boldsymbol{V}(\boldsymbol{\omega})^{-1}\boldsymbol{X}(\boldsymbol{X}'\boldsymbol{V}(\boldsymbol{\omega})^{-1}\boldsymbol{X})^{-1}\boldsymbol{X}'\boldsymbol{V}(\boldsymbol{\omega})^{-1}$. 

Therefore, the estimators $\hat{\boldsymbol{\beta}}$ and $\hat{\boldsymbol{\omega}}$ $(\equiv(\hat{\sigma}^2,\hat{\boldsymbol{\theta}})')$ can be obtained by iteratively solving \eqref{eq:estimation1}, \eqref{eq:estimation2} and minimizing \eqref{eq:estimation3}. The explicit algorithm is given in the supplementary material S1. Note that $\boldsymbol{V}(\boldsymbol{\omega})$ is a block diagonal matrix, i.e., a square matrix having main diagonal blocks square matrices such that the off-diagonal blocks are zero matrices. Therefore the computational burden for the matrix inversion of $\boldsymbol{V}(\boldsymbol{\omega})$ can be alleviated by the fact that the inverse of a block diagonal matrix is a block diagonal matrix, composed of the inversion of each block.

\subsection{Asymptotic Properties}

Asymptotic results are presented here to show that the estimators $\hat{\boldsymbol{\beta}},\hat{\sigma}^2$ and $\hat{\boldsymbol{\theta}}$ obtained in Section \ref{sec:estimation} are asymptotically normally distributed when $N(=nT)$ becomes sufficiently large. In the present context both $n$ and $T$ are sufficiently large. The assumptions are given in the supplementary material S2, and the proofs are stated in the supplementary material S3 and S4. These results are developed along the lines described in \cite{hung2012binary} and \cite{cressie1993asymptotic,cressie1996asymptotics}.

\begin{theorem}\label{thm:asy_fixed}
Under assumptions S2.1 and S2.2, the maximum quasi-PL estimator for the fixed effects $\boldsymbol{\beta}$ are consistent and asymptotically normal as $N\rightarrow\infty$,
\[
\sqrt{N}(\hat{\boldsymbol{\beta}}-\boldsymbol{\beta})=\Lambda_N^{-1}\frac{1}{\sqrt{N}}S_N(\boldsymbol{\beta},\boldsymbol{\omega})+o_p(1)
\]
and
\[
\sqrt{N}\Lambda_N^{1/2}(\hat{\boldsymbol{\beta}}-\boldsymbol{\beta})\stackrel{d}{\longrightarrow}\mathcal{N}(\mathbf{0},I_m),
\]
where $m$ is the size of the vector $\boldsymbol{\beta}$ (i.e., $m=1+R+d+dL$), the sample information matrix $$\Lambda_N=\frac{1}{N}\sum^n_{i=1}\sum^T_{t=1}X_{it}X'_{it}p_{it}(\boldsymbol{\beta},\boldsymbol{\omega})(1-p_{it}(\boldsymbol{\beta},\boldsymbol{\omega})),$$ and $S_N(\boldsymbol{\beta},\boldsymbol{\omega})=\sum^n_{i=1}\sum^T_{t=1}X_{it}(y_{it}-p_{it}(\boldsymbol{\beta},\boldsymbol{\omega}))$.
\end{theorem}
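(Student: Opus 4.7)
The plan is to treat the score $S_N(\boldsymbol{\beta},\boldsymbol{\omega})=\sum_{i,t}X_{it}(y_{it}-p_{it}(\boldsymbol{\beta},\boldsymbol{\omega}))$ as a sum of martingale differences indexed by time, apply a multivariate martingale central limit theorem to obtain the asymptotic normality of $N^{-1/2}S_N$, and then invert the Taylor expansion of the score equation around the true parameter to transfer this distribution to $\hat{\boldsymbol{\beta}}$. This is the standard route for partial-likelihood estimators \citep{slud1994partial} and the logistic time-series setting of \cite{hung2012binary}.

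First, I would set up the filtration $\mathcal{F}_{t-1}=\sigma\{y_{is}:\,s\leq t-1,\,i=1,\ldots,n\}$ augmented by $\boldsymbol{Z}$ and the design. By construction $X_{it}$ and $p_{it}(\boldsymbol{\beta},\boldsymbol{\omega})$ are $\mathcal{F}_{t-1}$-measurable, so the very definition $p_{it}=\mathbb{E}[y_{it}\mid H_{it}]$ gives $\mathbb{E}[X_{it}(y_{it}-p_{it})\mid\mathcal{F}_{t-1}]=\mathbf{0}$ with conditional covariance $X_{it}X_{it}'p_{it}(1-p_{it})$. Hence $M_t=\sum_{i=1}^n X_{it}(y_{it}-p_{it})$ is a vector martingale difference sequence whose conditional covariances sum exactly to $N\Lambda_N$. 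I would then apply a multivariate martingale CLT (e.g., Hall and Heyde, Theorem 3.2): the conditional Lindeberg condition is immediate because $|y_{it}-p_{it}|\leq 1$ and $\|X_{it}\|$ is uniformly bounded under assumption S2.1, so the truncation set is eventually empty, while the convergence of the normalized conditional variance $\Lambda_N\xrightarrow{p}\Lambda_\infty$ for a positive-definite limit follows from a law of large numbers for the binary time series under the stationarity/mixing conditions in S2.2. This yields $\Lambda_N^{-1/2}N^{-1/2}S_N\xrightarrow{d}\mathcal{N}(\mathbf{0},I_m)$.

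Second, I would establish consistency of $\hat{\boldsymbol{\beta}}$ by the usual M-estimator argument: the PQPL objective \eqref{eq:PQPL} is concave in $\boldsymbol{\beta}$ (being a sum of logistic log-likelihood contributions minus a quadratic penalty), so uniform convergence of $N^{-1}$ times this objective to a concave limit maximized at the true $\boldsymbol{\beta}$ (identifiability supplied by S2.2 and the nondegeneracy of $\Lambda_\infty$) forces $\hat{\boldsymbol{\beta}}\xrightarrow{p}\boldsymbol{\beta}$. A second-order Taylor expansion of $S_N(\hat{\boldsymbol{\beta}},\boldsymbol{\omega})=\mathbf{0}$ around $\boldsymbol{\beta}$ then gives
\[
\mathbf{0}=S_N(\boldsymbol{\beta},\boldsymbol{\omega})-\Bigl(\sum_{i,t}X_{it}X_{it}'\,p_{it}(\boldsymbol{\beta}^*,\boldsymbol{\omega})(1-p_{it}(\boldsymbol{\beta}^*,\boldsymbol{\omega}))\Bigr)(\hat{\boldsymbol{\beta}}-\boldsymbol{\beta}),
\]
where $\boldsymbol{\beta}^*$ lies between $\hat{\boldsymbol{\beta}}$ and $\boldsymbol{\beta}$. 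Consistency and the Lipschitz continuity of $p\mapsto p(1-p)$ imply that $N^{-1}$ times the bracketed Hessian converges in probability to the same $\Lambda_\infty$. Rearranging gives the stochastic expansion $\sqrt{N}(\hat{\boldsymbol{\beta}}-\boldsymbol{\beta})=\Lambda_N^{-1}N^{-1/2}S_N+o_p(1)$, and combining with the martingale CLT in the previous step delivers the stated asymptotic distribution.

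The main obstacle I anticipate is the presence of the latent field $\boldsymbol{Z}$: the score actually comes from the PQPL after integrating out $\boldsymbol{Z}$ via Laplace's method, so the effective score involves the plug-in $\tilde{\boldsymbol{Z}}(\boldsymbol{\beta},\boldsymbol{\omega})$, and its derivative with respect to $\boldsymbol{\beta}$ introduces extra terms that must be shown to contribute only $o_p(N^{1/2})$. This requires uniform control of the Laplace remainder and careful bookkeeping so that the profile score remains a martingale up to negligible error, after which the martingale-difference argument above applies unchanged. A secondary difficulty is verifying the ergodicity needed for $\Lambda_N\xrightarrow{p}\Lambda_\infty$ for the binary AR process with interaction terms $\boldsymbol{\gamma}_l\mathbf{x}y_{t-l}$; I would treat this conditionally on $\boldsymbol{Z}$ by invoking geometric ergodicity of the induced Markov chain on $\{0,1\}^R$, following the strategy of \cite{hung2012binary} and \cite{cressie1996asymptotics}.
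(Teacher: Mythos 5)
Your plan is mathematically the same route as the paper's, but the paper's written proof does essentially none of this work itself: it observes that model (4), rewritten as $\text{logit}(\boldsymbol{p})=\boldsymbol{X}\boldsymbol{\beta}+I_N\boldsymbol{Z}$ with $I_N$ viewed as the model matrix of a random effect, is a special case of the binary time series model with random effects of \cite{hung2012binary}, and then invokes Theorem 1 of that paper for the asymptotic normality of the score $S_N(\boldsymbol{\beta},\boldsymbol{\omega})$; the martingale-difference decomposition, the multivariate martingale CLT (with the Lindeberg condition trivial by boundedness), and the Taylor inversion of the score equation that you lay out are exactly the machinery inside that cited theorem. So you have correctly unpacked the citation rather than found a different proof. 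Two points of comparison are worth recording. First, the obstacle you flag---that the operative score comes from the PQPL with the plug-in $\tilde{\boldsymbol{Z}}(\boldsymbol{\beta},\boldsymbol{\omega})$, so the Laplace remainder must be shown to contribute only $o_p(N^{1/2})$---is genuinely not addressed by the paper's proof; it is hidden in the phrase ``if the variance-covariance parameters are given'' and in the reduction to the reference. (The envelope theorem helps you here: since $\tilde{\boldsymbol{Z}}$ maximizes the joint objective over $\boldsymbol{Z}$, the profile score in $\boldsymbol{\beta}$ carries no $\partial\tilde{\boldsymbol{Z}}/\partial\boldsymbol{\beta}$ term.) Second, the ergodicity you would need for $\Lambda_N\stackrel{p}{\longrightarrow}\Lambda_\infty$ is not derived in the paper either; it is effectively assumed through S2.2 and the conditions inherited from \cite{hung2012binary}, so in a faithful reconstruction you would likewise posit it rather than prove it. In short, your proposal is correct in outline and more explicit than the paper's own proof, which is a two-sentence reduction to prior work.
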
 

\begin{remark} For model \eqref{eq:simplemodel}, the estimator $\hat{\boldsymbol{\beta}}$ can be obtained by minimizing the penalized quasi-likelihood (PQL) function, which can be written as \eqref{eq:PQPL} with $T=1$. Under assumption S2.1 and the application of central limit theorem, such estimator has the same asymptotic properties as in Theorem \ref{thm:asy_fixed} with $N=n$.  
\end{remark}

For models ({\ref{eq:simplemodel}) and (\ref{eq:model_GPBiTS}), we have the following asymptotic properties for
$\hat{\boldsymbol{\omega}}$.

\begin{theorem}\label{thm:asy_reml}
Denote $[\Gamma_N(\boldsymbol{\omega})]_{i,j}=\partial^2L(\boldsymbol{\omega})/\partial\omega_i\partial\omega_j$ and $J_N(\boldsymbol{\omega})=[\mathbb{E}_{\boldsymbol{\omega}}\Gamma_N(\boldsymbol{\omega})]^{1/2}$. Then, under assumptions S2.3 and S2.4, as $N\rightarrow\infty$,
\[
J_N(\hat{\boldsymbol{\omega}})(\hat{\boldsymbol{\omega}}-\boldsymbol{\omega})\stackrel{d}{\longrightarrow}\mathcal{N}(\mathbf{0},I_{d+1}).
\]
\end{theorem}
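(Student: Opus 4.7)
I would follow the standard M-estimator route: mean-value expand the REML first-order condition about the truth, show that a normalized Hessian tends to identity, and establish a central limit theorem for the score. By construction $\partial L(\hat{\boldsymbol{\omega}})/\partial\boldsymbol{\omega}=0$; writing $U_N(\boldsymbol{\omega}):=\partial L(\boldsymbol{\omega})/\partial\boldsymbol{\omega}$ and applying the mean value theorem component-wise yields some $\boldsymbol{\omega}^{\ast}$ on the segment between $\hat{\boldsymbol{\omega}}$ and $\boldsymbol{\omega}$ with
\[
0 \;=\; U_N(\boldsymbol{\omega}) + \Gamma_N(\boldsymbol{\omega}^{\ast})(\hat{\boldsymbol{\omega}} - \boldsymbol{\omega}),
\]
hence
\[
J_N(\hat{\boldsymbol{\omega}})(\hat{\boldsymbol{\omega}} - \boldsymbol{\omega}) \;=\; -\bigl[J_N(\hat{\boldsymbol{\omega}})\,\Gamma_N(\boldsymbol{\omega}^{\ast})^{-1}\,J_N(\boldsymbol{\omega})\bigr]\bigl[J_N(\boldsymbol{\omega})^{-1}U_N(\boldsymbol{\omega})\bigr].
\]
By Slutsky the theorem reduces to two sub-claims: (a) the bracketed matrix tends to $-I_{d+1}$ in probability, and (b) $J_N(\boldsymbol{\omega})^{-1}U_N(\boldsymbol{\omega})\stackrel{d}{\longrightarrow}\mathcal{N}(\mathbf{0},I_{d+1})$.

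\textbf{Hessian normalization.} For (a) I would first invoke consistency of $\hat{\boldsymbol{\omega}}$, which should fall out of the identifiability and smoothness content of assumption~S2.3 in the increasing-domain regime; consistency then forces $\boldsymbol{\omega}^{\ast}\stackrel{P}{\longrightarrow}\boldsymbol{\omega}$. Differentiating \eqref{eq:estimation3} twice produces Hessian entries composed of trace terms in $\partial\boldsymbol{V}(\boldsymbol{\omega})^{-1}/\partial\omega_i$ and $\partial\varPi(\boldsymbol{\omega})/\partial\omega_i$, plus quadratic forms $\tilde{\boldsymbol{\eta}}'(\partial^2\varPi/\partial\omega_i\partial\omega_j)\tilde{\boldsymbol{\eta}}$. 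The deterministic trace terms are handled through the spectral regularity of $\boldsymbol{R}_{\boldsymbol{\theta}}$ guaranteed by S2.4; the quadratic forms concentrate on their means using the block-diagonal structure $\boldsymbol{V}(\boldsymbol{\omega})=\boldsymbol{W}^{-1}+\sigma^2\boldsymbol{R}_{\boldsymbol{\theta}}\otimes I_T$ together with the boundedness of the binary responses. Continuity of the map $\boldsymbol{\omega}\mapsto J_N(\boldsymbol{\omega})^{-1}\Gamma_N(\boldsymbol{\omega})J_N(\boldsymbol{\omega})^{-1}$ at the truth then upgrades the consistency of $\boldsymbol{\omega}^{\ast}$ to $J_N(\boldsymbol{\omega})^{-1}\Gamma_N(\boldsymbol{\omega}^{\ast})J_N(\boldsymbol{\omega})^{-1}\stackrel{P}{\longrightarrow}I_{d+1}$, and a second continuous-mapping step exchanges $J_N(\boldsymbol{\omega})$ for $J_N(\hat{\boldsymbol{\omega}})$ in the outer factors.

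\textbf{Score CLT and main obstacle.} For (b), the plan is to exploit the across-time block-diagonal structure of $\boldsymbol{V}(\boldsymbol{\omega})$ inherited from the independence of $\{Z_t\}_t$: this writes $U_N(\boldsymbol{\omega})$ as a sum of $t$-indexed increments that are $H_t$-measurable and centered, hence a martingale-difference sequence relative to the natural filtration. A martingale central limit theorem, with Lindeberg condition verified using $y_{it}\in\{0,1\}$ and the spectral bounds on $\boldsymbol{R}_{\boldsymbol{\theta}}$ supplied by S2.4, delivers the $T\to\infty$ direction, while a CLT for quadratic forms of mean-zero sub-Gaussian vectors in the style of \cite{cressie1996asymptotics} handles the $n\to\infty$ direction; the combined $N=nT\to\infty$ regime follows by a standard diagonal triangular-array argument. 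The sharpest obstacle, and where the logistic nature of the model enters, is that $U_N$ is not a clean linear statistic in the $y_{it}$ but a quadratic form in the \emph{working} response $\tilde{\boldsymbol{\eta}}$, which is itself an implicit function of $\hat{\boldsymbol{\beta}}$ and of lagged binary responses through the IWLS fixed point \eqref{eq:estimation1}--\eqref{eq:estimation2}. I would therefore linearize $\tilde{\boldsymbol{\eta}}$ about $\boldsymbol{X}\boldsymbol{\beta}+\boldsymbol{Z}$, use Theorem~\ref{thm:asy_fixed} to show that the error from replacing $\boldsymbol{\beta}$ by $\hat{\boldsymbol{\beta}}$ contributes only $o_p(1)$ after standardization by $J_N(\boldsymbol{\omega})^{-1}$, and then apply the martingale/quadratic-form CLT to the resulting leading linearized piece.
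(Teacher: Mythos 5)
Your route is genuinely different from the paper's. The paper performs no asymptotic analysis of its own for this theorem: it observes that, within the Breslow--Clayton penalized quasi-likelihood approximation \citep{breslow1993approximate}, estimating $\boldsymbol{\omega}$ amounts to REML estimation in the working linear mixed model $\tilde{\boldsymbol{\eta}}=\boldsymbol{X}\boldsymbol{\beta}+\boldsymbol{Z}+\epsilon$ with $\epsilon\sim\mathcal{N}(\mathbf{0}_N,\boldsymbol{W}^{-1})$, i.e.\ a Gaussian general linear model with covariance $\Sigma(\boldsymbol{\omega})+\boldsymbol{W}^{-1}$, and then quotes Corollary~3.3 of \cite{cressie1993asymptotic} wholesale; assumptions S2.3 and S2.4 are essentially the hypotheses of that corollary (cf.\ the proof of Theorem~2.2 in \cite{cressie1996asymptotics}). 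Your M-estimator expansion, normalized-Hessian convergence, and score CLT are therefore a re-derivation of what the cited corollary already packages, so sub-claims (a) and (b) constitute work the paper deliberately avoids. What your approach buys is transparency and, in principle, a stronger statement: you correctly identify that $\tilde{\boldsymbol{\eta}}$ is not Gaussian but an implicit function of lagged binary responses and of $\hat{\boldsymbol{\beta}}$, and you propose to prove a CLT for the actual score rather than for the score of the idealized Gaussian working model --- a discrepancy the paper's proof silently ignores. What the paper's approach buys is a two-sentence proof in which S2.3--S2.4 are invoked exactly as stated, rather than being repurposed for Lindeberg conditions and concentration bounds they were not formulated to deliver.

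If you pursue your route, be aware that its hardest steps are currently assertions rather than arguments. Consistency of $\hat{\boldsymbol{\omega}}$ does not simply ``fall out'' of S2.3, which only controls the limiting normalization of the expected Hessian; in \cite{cressie1993asymptotic} consistency is itself a theorem proved under these conditions, and you would have to reproduce that argument. The martingale decomposition of $U_N(\boldsymbol{\omega})$ is also delicate: each component of the REML score is a centered quadratic form of the type $\tilde{\boldsymbol{\eta}}'(\partial\varPi(\boldsymbol{\omega})/\partial\omega_i)\tilde{\boldsymbol{\eta}}$ plus trace corrections, coupling all $n$ sites within each time block through $\boldsymbol{R}_{\boldsymbol{\theta}}^{-1}$, so the $t$-indexed increments are quadratic, not linear, in the conditionally centered residuals; verifying that they form a martingale-difference array with stabilizing conditional variances in the joint $n,T\to\infty$ regime, and that the $\hat{\boldsymbol{\beta}}$-plug-in error is $o_p(1)$ after normalization by $J_N(\boldsymbol{\omega})^{-1}$, is a substantial project rather than a routine application of known CLTs.
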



Note that the asymptotic results here focus on the conditional inference of $\hat{\boldsymbol{\beta}}|\boldsymbol{\omega}$ and $\hat{\boldsymbol{\omega}}|\boldsymbol{\beta}$. Therefore, these results still hold in the presence of the orthogonal Gaussian process approach proposed by \cite{plumlee2016orthogonal}. Theoretically speaking, the orthogonal Gaussian process approach is expected to reduce the covariance between $\hat{\boldsymbol{\beta}}$ and $\hat{\boldsymbol{\omega}}$. However, the derivation of the joint distribution would be nontrivial. We leave this result to the future work.

\section{Construction of Predictive Distribution}

For computer experiments, the construction of an optimal predictor and its corresponding predictive distribution is important for uncertainty quantification, sensitivity analysis, process optimization, and calibration \citep{santner2003design}.

First, some notation is introduced. 
For some untried setting $\mathbf{x}_{n+1}$, denote the predictive probability at time $s$ by $p_s(\mathbf{x}_{n+1})=\mathbb{E}[y_s(\mathbf{x}_{n+1})|H_{s}]$, where $H_{s}=\{y_{n+1,s-1},y_{n+1,s-2},\ldots\}$. 
Assume that $D_{n+1,s}$ represents the ``previous information" including $\{y_{n+1,s-1},y_{n+1,s-2},\ldots,\\p_{n+1,s-1},p_{n+1,s-2},\ldots \}$ at $\mathbf{x}_{n+1}$ and $\{y_{it},p_{it}\}$, where $i=1,\ldots,n$ and $t=1,\ldots,T$.
Also, let $Logitnormal(\mu, \sigma^2)$ represent a logit-normal distribution $P$, where $P=\exp\{X\}/(1+\exp\{X\})$ and $X$ has a univariate normal distribution with $\mu$ and variance $\sigma^2$.
Denote the first two moments of the distribution by $\mathbb{E}[P]=\kappa(\mu,\sigma^2)$ and $\mathbb{V}[P]=\tau(\mu,\sigma^2)$. In general, there is no closed form expression for $\kappa(\mu,\sigma^2)$ and $\tau(\mu,\sigma^2)$, but it can be easily computed by numerical integration such as in the package \texttt{logitnorm} \citep{logitnorm} in \texttt{R} \citep{R}.  More discussions on logit-normal distribution can be found in \cite{mead1965generalised,atchison1980logistic,frederic2008two}.

We first present a lemma which shows that, given $D_{n+1,s}$, the conditional distribution of  $p_s(\mathbf{x}_{n+1})$ in model \eqref{eq:model_GPBiTS} is logit-normal.   
This result lays the foundation for the construction of predictive distribution. The proof is given in the supplementary material S5. 

\begin{lemma}\label{thm1} For model \eqref{eq:model_GPBiTS}, the conditional distribution of $p_s(\mathbf{x}_{n+1})$ can be written as
\[
p_s(\mathbf{x}_{n+1})|D_{n+1,s}\sim Logitnormal(m(D_{n+1,s}),v(D_{n+1,s})),
\]
where 
\begin{align*}
m(D_{n+1,s})=\sum^R_{r=1}\varphi_ry_{n+1,s-r}+\alpha_0+\mathbf{x}'_{n+1}\boldsymbol{\alpha}+\sum^L_{l=1}\boldsymbol{\gamma}_l\mathbf{x}_{n+1}y_{n+1,s-l}+\boldsymbol{r}_{\boldsymbol{\theta}}' \boldsymbol{R}_{\boldsymbol{\theta}}^{-1}\left(\log \frac{\boldsymbol{p}_s}{\boldsymbol{1}_n-\boldsymbol{p}_s}-\boldsymbol{\mu}_s\right),
\end{align*}
$v(D_{n+1,s})=\sigma^2\left(1-\boldsymbol{r}_{\boldsymbol{\theta}}' \boldsymbol{R}_{\boldsymbol{\theta}}^{-1}\boldsymbol{r}_{\boldsymbol{\theta}}\right),$
$\boldsymbol{r}_{\boldsymbol{\theta}}=(R_{\boldsymbol{\theta}}(\mathbf{x}_{n+1},\mathbf{x}_1),\ldots,R_{\boldsymbol{\theta}}(\mathbf{x}_{n+1},\mathbf{x}_n))',\boldsymbol{R}_{\boldsymbol{\theta}}=\{R_{\boldsymbol{\theta}}(\mathbf{x}_i,\mathbf{x}_j)\}$, $\boldsymbol{p}_s=(p_s(\mathbf{x}_1),\ldots,p_s(\mathbf{x}_n))'$,  and 
$
(\boldsymbol{\mu}_s)_i=\sum^R_{r=1}\varphi_ry_{i,s-r}+\alpha_0+\mathbf{x}_i'\boldsymbol{\alpha}+\sum^L_{l=1}\boldsymbol{\gamma}_l\mathbf{x}_iy_{i,s-l}.
$
\end{lemma}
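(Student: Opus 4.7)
The plan is to exploit the fact that, conditional on $D_{n+1,s}$, every component of the linear predictor in model \eqref{eq:model_GPBiTS} at $(\mathbf{x}_{n+1},s)$ is either deterministic or reduces to a standard Gaussian kriging problem. First I would write
\[
\mathrm{logit}(p_s(\mathbf{x}_{n+1})) = \sum_{r=1}^{R}\varphi_r y_{n+1,s-r} + \alpha_0 + \mathbf{x}_{n+1}'\boldsymbol{\alpha} + \sum_{l=1}^{L}\boldsymbol{\gamma}_l\mathbf{x}_{n+1}y_{n+1,s-l} + Z_s(\mathbf{x}_{n+1}),
\]
and observe that the first four terms are measurable with respect to $D_{n+1,s}$ because the past outputs $y_{n+1,s-1},y_{n+1,s-2},\ldots$ are included in that $\sigma$-field. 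Thus only $Z_s(\mathbf{x}_{n+1})$ carries randomness conditional on $D_{n+1,s}$.

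Next I would reduce the conditioning set. Since the processes $\{Z_t\}$ are independent across $t$, the variable $Z_s(\mathbf{x}_{n+1})$ is independent of $\{Z_t(\mathbf{x}_i): t\neq s\}$, so conditioning on information from times $t\neq s$ adds nothing. At time $s$ itself, the model identity $\mathrm{logit}(p_s(\mathbf{x}_i))=(\boldsymbol{\mu}_s)_i+Z_s(\mathbf{x}_i)$ shows that $(p_{is})_{i=1}^{n}$ together with the $D_{n+1,s}$-measurable mean components $(\boldsymbol{\mu}_s)_i$ exactly recover $\mathbf{Z}_s:=(Z_s(\mathbf{x}_1),\ldots,Z_s(\mathbf{x}_n))'$. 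Hence the conditional distribution of $Z_s(\mathbf{x}_{n+1})$ given $D_{n+1,s}$ equals its conditional distribution given $\mathbf{Z}_s$.

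Now $(Z_s(\mathbf{x}_{n+1}),\mathbf{Z}_s')'$ is jointly Gaussian with mean zero and covariance obtained from the kernel $\sigma^2 R_{\boldsymbol{\theta}}$. Applying the standard Gaussian conditioning formula yields
\[
Z_s(\mathbf{x}_{n+1})\mid \mathbf{Z}_s \sim \mathcal{N}\bigl(\boldsymbol{r}_{\boldsymbol{\theta}}'\boldsymbol{R}_{\boldsymbol{\theta}}^{-1}\mathbf{Z}_s,\; \sigma^2(1-\boldsymbol{r}_{\boldsymbol{\theta}}'\boldsymbol{R}_{\boldsymbol{\theta}}^{-1}\boldsymbol{r}_{\boldsymbol{\theta}})\bigr),
\]
where the $\sigma^2$ factors in numerator and denominator cancel because the kernel factors as $\sigma^2$ times the correlation matrix. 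Substituting $\mathbf{Z}_s=\log(\boldsymbol{p}_s/(\boldsymbol{1}_n-\boldsymbol{p}_s))-\boldsymbol{\mu}_s$ and adding the deterministic components identified in the first step gives exactly the expressions for $m(D_{n+1,s})$ and $v(D_{n+1,s})$ in the statement.

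Finally, I would conclude by invoking the definition of the logit-normal distribution: since $\mathrm{logit}(p_s(\mathbf{x}_{n+1}))\mid D_{n+1,s}$ is normal with the computed mean and variance, $p_s(\mathbf{x}_{n+1})\mid D_{n+1,s}$ is $Logitnormal(m(D_{n+1,s}),v(D_{n+1,s}))$. I do not expect any real obstacle here: the only point requiring care is the second step, namely the argument that conditioning on the full history $D_{n+1,s}$ can be reduced, via the cross-time independence of $Z_t$ and the model equation at time $s$, to conditioning on the $n$-dimensional Gaussian vector $\mathbf{Z}_s$. Once that reduction is in place the result is a direct application of Gaussian kriging.
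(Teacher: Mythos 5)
Your proof is correct, but it takes a more direct route than the paper's. The paper works on the odds scale: it defines the multivariate log-normal distribution, proves an auxiliary lemma computing the conditional density of one log-normal coordinate given the others via an explicit partitioned-matrix-inverse calculation, then applies a Jacobian transformation $b=p/(1-p)$ to convert that conditional density into a logit-normal density, first for model \eqref{eq:simplemodel} and then for model \eqref{eq:model_GPBiTS} by absorbing the autoregressive terms into the mean function. You instead stay on the logit scale throughout: you isolate $Z_s(\mathbf{x}_{n+1})$ as the only non-degenerate term conditional on $D_{n+1,s}$, reduce the conditioning set to $\mathbf{Z}_s$ via the cross-time independence of $\{Z_t\}$ and the invertibility of the model identity $\mathrm{logit}(p_s(\mathbf{x}_i))=(\boldsymbol{\mu}_s)_i+Z_s(\mathbf{x}_i)$, apply the standard Gaussian conditioning formula, and then invoke the paper's own definition of $Logitnormal(\mu,\sigma^2)$ as the logistic transform of a normal variable. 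This buys you a shorter argument with no density manipulations or Jacobians, and it makes transparent where each piece of $m(D_{n+1,s})$ and $v(D_{n+1,s})$ comes from (deterministic history terms versus the kriging correction). The paper's density-level computation buys an explicit form of the conditional log-normal density as a reusable lemma, but for this statement it is the longer path. One small point present in both arguments at the same level of informality: the reduction from conditioning on all of $D_{n+1,s}$ to conditioning on $\mathbf{Z}_s$ also requires that the binary outputs $y_{it}$ (in particular $\mathbf{y}_s$ at the training sites) carry no information about $Z_s(\mathbf{x}_{n+1})$ beyond what is in the corresponding $p_{it}$ and the history; this holds because each $y_{it}$ is conditionally Bernoulli given $p_{it}$ and $H_{it}$, but you (like the paper) leave it implicit.
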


\begin{remark}\label{lemma}
For model \eqref{eq:simplemodel}, the result in Lemma \ref{thm1} can be applied by having $R=0,L=0,s=1$ and $T=1$. Then, $D_{n+1,s}$ can be written as $D_{n+1}$ containing only $\{p_{1,1},\ldots,p_{n,1}\}$, and we have the conditional distribution
$$
p(\mathbf{x}_{n+1})|D_{n+1}\sim Logitnormal(m(D_{n+1}), v(D_{n+1})),
$$
where
$m(D_{n+1})=\alpha_0+\mathbf{x}'_{n+1}\boldsymbol{\alpha}+\boldsymbol{r}_{\boldsymbol{\theta}}'\boldsymbol{R}_{\boldsymbol{\theta}}^{-1}(\log \frac{\boldsymbol{p}_1}{\mathbf{1}-\boldsymbol{p}_1}-\boldsymbol{\mu}^n)$,
$
v(D_{n+1})=\sigma^2(1-\boldsymbol{r}_{\boldsymbol{\theta}}'\boldsymbol{R}_{\boldsymbol{\theta}}^{-1}\boldsymbol{r}_{\boldsymbol{\theta}}),$ $\boldsymbol{\mu}^n=(\alpha_0+\mathbf{x}'_1\boldsymbol{\alpha},\ldots,\alpha_0+\mathbf{x}'_n\boldsymbol{\alpha})',\boldsymbol{r}_{\boldsymbol{\theta}}=(R_{\boldsymbol{\theta}}(\mathbf{x}_{n+1},\mathbf{x}_1),\ldots,R_{\boldsymbol{\theta}}(\mathbf{x}_{n+1},\mathbf{x}_n))'$, and $\boldsymbol{R}_{\boldsymbol{\theta}}=\{R_{\boldsymbol{\theta}}(\mathbf{x}_i,\mathbf{x}_j)\}$.
\end{remark}

Based on Lemma \ref{thm1}, the prediction of $p_s(\mathbf{x}_{n+1})$ for some untried setting $\mathbf{x}_{n+1}$ and its variance can then be obtained in the next theorem. The proof is given in the supplementary material S6. The definition of minimum mean squared prediction error of $p$ given $D$ is first stated as follows,
\[
\hat{p}=\hat{p}(D)=\arg\min_{\eta}\mathbb{E}_F[(p-\eta)^2],
\]
where $F(\cdot)$ is the joint distribution of $(p,D)$ and $\mathbb{E}_F[\cdot]$ denotes expectation under distribution $F(\cdot)$.

\begin{theorem}\label{thm2}
Given $D_{n+1,s}=\{\boldsymbol{y}'_1,\ldots,\boldsymbol{y}'_T, \boldsymbol{p}'_1,\ldots,\boldsymbol{p}'_T,y_{n+1,s-1},\ldots,y_{n+1,1},p_{n+1,s-1},\ldots,p_{n+1,1}\}$,
\begin{itemize} 
\item[(i)] the minimum mean squared prediction error (MMSPE) predictor of $p_s(\mathbf{x}_{n+1})$, denoted by $\hat{p}_s(\mathbf{x}_{n+1})$, is 
\begin{align*}\label{MMSE}
\mathbb{E}\left[p_s(\mathbf{x}_{n+1})|D_{n+1,s}\right]&=\kappa(m(D_{n+1,s}),v(D_{n+1,s}))\\
\text{with variance}\quad 
\mathbb{V}\left[p_s(\mathbf{x}_{n+1})|D_{n+1,s}\right]&=\tau(m(D_{n+1,s}),v(D_{n+1,s}));
\end{align*}
\item[(ii)] the MMSPE predictor is an interpolator, i.e., if $\mathbf{x}_{n+1}=\mathbf{x}_{i}$ for $i=1,\cdots, n$, then $\hat{p}_s(\mathbf{x}_{n+1})=\mathbb{E}\left[p_s(\mathbf{x}_{n+1})|D_{n+1,s}\right]=p_s(\mathbf{x}_{i})$ and the predictive variance is 0;
\item[(iii)] the $q$-th quantile of the conditional distribution $p(\mathbf{x}_{n+1})|D_{n+1,s}$ is
\[
\frac{\exp\{m(D_{n+1,s})+z_q\sqrt{v(D_{n+1,s})}\}}{1+\exp\{m(D_{n+1,s})+z_q\sqrt{v(D_{n+1,s})}\}},
\]
where $z_q$ is the $q$-th quantile of the standard normal distribution. 
\end{itemize}
\end{theorem}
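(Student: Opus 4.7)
My plan is to leverage Lemma~\ref{thm1} directly: since the conditional law of $p_s(\mathbf{x}_{n+1})$ given $D_{n+1,s}$ is already identified as $\mathrm{Logitnormal}(m(D_{n+1,s}), v(D_{n+1,s}))$, each of the three parts reduces to extracting a standard feature (mean, degenerate case, quantile) of this distribution.

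For part (i), I would invoke the well-known fact that the minimum mean squared prediction error estimator of a random variable given observables is its conditional expectation, so $\hat{p}_s(\mathbf{x}_{n+1}) = \mathbb{E}[p_s(\mathbf{x}_{n+1})\mid D_{n+1,s}]$. Applying the definitions $\kappa(\mu,\sigma^2) = \mathbb{E}[P]$ and $\tau(\mu,\sigma^2) = \mathbb{V}[P]$ for a $\mathrm{Logitnormal}(\mu,\sigma^2)$ variate $P$ to the distribution from Lemma~\ref{thm1} yields the stated expressions immediately.

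For part (ii), the argument is a matrix algebra check. When $\mathbf{x}_{n+1} = \mathbf{x}_i$, the correlation vector $\boldsymbol{r}_{\boldsymbol{\theta}}$ equals the $i$-th column of $\boldsymbol{R}_{\boldsymbol{\theta}}$, so $\boldsymbol{R}_{\boldsymbol{\theta}}^{-1}\boldsymbol{r}_{\boldsymbol{\theta}} = \mathbf{e}_i$ (the $i$-th standard basis vector) and $\boldsymbol{r}_{\boldsymbol{\theta}}'\boldsymbol{R}_{\boldsymbol{\theta}}^{-1}\boldsymbol{r}_{\boldsymbol{\theta}} = 1$. Consequently $v(D_{n+1,s}) = 0$ and $m(D_{n+1,s})$ collapses to the $i$-th coordinate $\log\frac{p_s(\mathbf{x}_i)}{1-p_s(\mathbf{x}_i)}$ plus the mean-function term evaluated at $\mathbf{x}_i$, which is precisely $\log\frac{p_s(\mathbf{x}_i)}{1-p_s(\mathbf{x}_i)}$ after the $\boldsymbol{\mu}_s$ piece cancels. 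The logit-normal distribution with zero variance degenerates to a point mass at $\exp(m)/(1+\exp(m)) = p_s(\mathbf{x}_i)$, delivering both the interpolation identity and zero predictive variance.

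For part (iii), I would use monotonicity of the logistic transformation $g(x) = e^x/(1+e^x)$. If $X \sim \mathcal{N}(m, v)$, then for any $q \in (0,1)$ the $q$-th quantile of $X$ is $m + z_q\sqrt{v}$, and applying the strictly increasing map $g$ yields the $q$-th quantile of $g(X)$ as $g(m + z_q\sqrt{v})$, giving the stated formula. I do not anticipate any serious obstacle here, since everything follows from Lemma~\ref{thm1}; the only subtlety worth care is verifying the algebraic cancellation in part (ii) that drives $m(D_{n+1,s}) \to \text{logit}(p_s(\mathbf{x}_i))$, which requires writing out the $i$-th component of $\boldsymbol{\mu}_s$ and confirming it matches the non-kriging part of $m$.
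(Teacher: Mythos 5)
Your proposal is correct and follows essentially the same route as the paper's proof: part (i) via the standard fact that the conditional mean minimizes mean squared prediction error combined with Lemma~\ref{thm1}, part (ii) via the degeneration $v(D_{n+1,s})=0$ and the cancellation giving $m(D_{n+1,s})=\mathrm{logit}(p_s(\mathbf{x}_i))$, and part (iii) via monotonicity of the logistic/logit transformation applied to normal quantiles. The only cosmetic difference is that you spell out the matrix-algebra identities $\boldsymbol{R}_{\boldsymbol{\theta}}^{-1}\boldsymbol{r}_{\boldsymbol{\theta}}=\mathbf{e}_i$ and $\boldsymbol{r}_{\boldsymbol{\theta}}'\boldsymbol{R}_{\boldsymbol{\theta}}^{-1}\boldsymbol{r}_{\boldsymbol{\theta}}=1$ in part (ii), which the paper states without derivation, and you apply the increasing map in the forward (logistic) rather than backward (logit) direction in part (iii).
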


Theorem \ref{thm2} shows that, given $D_{n+1,s}$, the new predictor for binary data can interpolate the underlying probabilities which generate the training data. According to Theorem \ref{thm2}(iii) and the fact that $v(D_{n+1,s})$ increases with the distance to the training data, 
this result shows an increasing predictive uncertainty for points away from the training data. This predictive property is desirable and consistent with the conventional GP predictor.

In practice, only the binary outputs are observable and the underlying probabilities are not available in the training data. Thus, the following results construct the MMSPE predictor of $p_s(\mathbf{x}_{n+1})$ given $\boldsymbol{Y}=(\boldsymbol{y}'_1,\ldots,\boldsymbol{y}'_T, y_1(\mathbf{x}_{n+1}),\ldots,y_{s-1}(\mathbf{x}_{n+1}))'$. These results can be used for prediction and quantification of the predictive uncertainty, such as constructing predictive confidence intervals for untried settings.

\begin{theorem}\label{cor1}
Given $\boldsymbol{Y}=(\boldsymbol{y}'_1,\ldots,\boldsymbol{y}'_T, y_1(\mathbf{x}_{n+1}),\ldots,y_{s-1}(\mathbf{x}_{n+1}))'$,
\begin{itemize} 
\item[(i)]
The MMSPE predictor of $p_s(\mathbf{x}_{n+1})$ is 
\begin{equation}\label{cor:mean}
\hat{p}_s(\mathbf{x}_{n+1})=\mathbb{E}\left[p_s(\mathbf{x}_{n+1})|\boldsymbol{Y}\right]=\mathbb{E}_{\boldsymbol{p}|\boldsymbol{Y}}\left[\kappa(m(D_{n+1,s}),v(D_{n+1,s}))|\boldsymbol{Y}\right]
\end{equation}
with variance
\begin{equation}\label{cor:var}
\mathbb{V}\left[p_s(\mathbf{x}_{n+1})|\boldsymbol{Y}\right]=\mathbb{E}_{\boldsymbol{p}|\boldsymbol{Y}}\left[\tau(m(D_{n+1,s}),v(D_{n+1,s}))|\boldsymbol{Y}\right]+\mathbb{V}_{\boldsymbol{p}|\boldsymbol{Y}}\left[\kappa(m(D_{n+1,s}),v(D_{n+1,s}))|\boldsymbol{Y}\right],
\end{equation}
where $\boldsymbol{p}=(\boldsymbol{p}'_1,\cdots,\boldsymbol{p}'_T, p_1(\mathbf{x}_{n+1}),\ldots,p_{s-1}(\mathbf{x}_{n+1}))'$.
\item[(ii)]  When $\mathbf{x}_{n+1}=\mathbf{x}_i$, the MMSPE predictor becomes  $\hat{p}_s(\mathbf{x}_{i})=\mathbb{E}_{\boldsymbol{p}|\boldsymbol{Y}}\left[p_s(\mathbf{x}_i)|\boldsymbol{Y}\right]$  with variance $\mathbb{V}_{\boldsymbol{p}|\boldsymbol{Y}}\left[p_s(\mathbf{x}_i)|\boldsymbol{Y}\right]$. 
\item[(iii)]  The quantiles of $p_s(\mathbf{x}_{n+1})|\boldsymbol{Y}$ can be approximated by the sample quantiles of $\{p^{(j)}_s\}$, where  $p^{(j)}_s$ are random samples generated from $p_s(\mathbf{x}_{n+1})|D_{n+1,s}$ following the logit-normal distribution given in Lemma \ref{thm1}, and $\boldsymbol{p}$ in $D_{n+1,s}=\{\boldsymbol{p},\boldsymbol{Y}\}$ are random samples from distribution $\boldsymbol{p}|\boldsymbol{Y}$. 
\end{itemize}
\end{theorem}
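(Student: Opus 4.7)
The plan is to obtain each piece by conditioning first on the latent probability vector $\boldsymbol{p}$ (so that Theorem \ref{thm2} applies directly), and then integrating $\boldsymbol{p}$ out against its posterior given the observables $\boldsymbol{Y}$. The key structural fact is that the conditioning set $D_{n+1,s}$ used in Lemma \ref{thm1} and Theorem \ref{thm2} is exactly the join $\{\boldsymbol{p},\boldsymbol{Y}\}$, since the historical $y_{n+1,t}$ components are contained in $\boldsymbol{Y}$ and the historical $p_{n+1,t}$ and $\boldsymbol{p}_t$ components are contained in $\boldsymbol{p}$. Hence, for any integrable functional $g$ of $p_s(\mathbf{x}_{n+1})$, $\mathbb{E}[g(p_s(\mathbf{x}_{n+1}))\mid\boldsymbol{p},\boldsymbol{Y}] = \mathbb{E}[g(p_s(\mathbf{x}_{n+1}))\mid D_{n+1,s}]$, and we may compute the right-hand side via the logit-normal distribution of Lemma \ref{thm1}.

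For part (i), I would apply the tower property of conditional expectation: $\mathbb{E}[p_s(\mathbf{x}_{n+1})\mid \boldsymbol{Y}] = \mathbb{E}_{\boldsymbol{p}\mid\boldsymbol{Y}}\bigl[\mathbb{E}[p_s(\mathbf{x}_{n+1})\mid \boldsymbol{p},\boldsymbol{Y}]\bigr]$, and then substitute $\mathbb{E}[p_s(\mathbf{x}_{n+1})\mid D_{n+1,s}] = \kappa(m(D_{n+1,s}),v(D_{n+1,s}))$ from Theorem \ref{thm2}(i), giving \eqref{cor:mean}. For the variance I would use the law of total variance,
\begin{equation*}
\mathbb{V}[p_s(\mathbf{x}_{n+1})\mid \boldsymbol{Y}] = \mathbb{E}_{\boldsymbol{p}\mid\boldsymbol{Y}}\bigl[\mathbb{V}[p_s(\mathbf{x}_{n+1})\mid\boldsymbol{p},\boldsymbol{Y}]\bigr] + \mathbb{V}_{\boldsymbol{p}\mid\boldsymbol{Y}}\bigl[\mathbb{E}[p_s(\mathbf{x}_{n+1})\mid\boldsymbol{p},\boldsymbol{Y}]\bigr],
\end{equation*}
and plug in $\tau(m(D_{n+1,s}),v(D_{n+1,s}))$ and $\kappa(m(D_{n+1,s}),v(D_{n+1,s}))$ from Theorem \ref{thm2}(i) to recover \eqref{cor:var}. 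That $\hat{p}_s(\mathbf{x}_{n+1})$ so defined is the MMSPE predictor is just the usual $L^2$ projection characterization of the conditional expectation.

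For part (ii), when $\mathbf{x}_{n+1}=\mathbf{x}_i$, Theorem \ref{thm2}(ii) says the inner conditional distribution given $D_{n+1,s}$ is degenerate at $p_s(\mathbf{x}_i)$, so $\kappa$ collapses to $p_s(\mathbf{x}_i)$ and $\tau$ to $0$; substituting these into \eqref{cor:mean} and \eqref{cor:var} leaves only the outer expectation and variance with respect to $\boldsymbol{p}\mid\boldsymbol{Y}$. For part (iii), I would justify the Monte Carlo approximation by the composition sampling principle: if $\boldsymbol{p}$ is drawn from its conditional law given $\boldsymbol{Y}$ and then $p_s^{(j)}$ is drawn from the logit-normal $\mathrm{Logitnormal}(m(D_{n+1,s}),v(D_{n+1,s}))$ of Lemma \ref{thm1} (with $D_{n+1,s}$ assembled from the drawn $\boldsymbol{p}$ and the observed $\boldsymbol{Y}$), then $p_s^{(j)}$ is marginally a draw from $p_s(\mathbf{x}_{n+1})\mid\boldsymbol{Y}$, so the empirical quantiles of $\{p_s^{(j)}\}$ are consistent estimates of the target quantiles by the Glivenko--Cantelli theorem.

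The only non-routine step is handling the posterior $\boldsymbol{p}\mid\boldsymbol{Y}$ needed to carry out either the exact expectation in (i)/(ii) or the sampling in (iii); this distribution is not available in closed form because the Bernoulli likelihood and logit-normal latent structure are non-conjugate, and so in practice one must resort to an MCMC or other simulation scheme. I would flag this computational aspect (and defer the algorithmic details to the supplement) rather than attempt an analytic expression, since the theorem itself only asserts the identity and the approximability of the quantiles, both of which follow from the conditional-expectation manipulations described above.
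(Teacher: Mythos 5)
Your proposal is correct and follows essentially the same route as the paper: the paper itself identifies $D_{n+1,s}=\{\boldsymbol{p},\boldsymbol{Y}\}$ and obtains (i) and (ii) by the tower property and the law of total variance applied to the conditional moments from Theorem \ref{thm2}, with the sampling from $\boldsymbol{p}|\boldsymbol{Y}$ deferred to a Metropolis--Hastings scheme in the supplement exactly as you flag. The only cosmetic caveat is that your appeal to Glivenko--Cantelli in (iii) presumes i.i.d.\ draws, whereas the actual samples come from an MCMC chain and would formally require an ergodicity argument; the paper glosses over this as well.
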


Although the outcomes in the study are binary, the major focus is on the predictive uncertainty of the underlying probability $p_s(\mathbf{x}_{n+1})$. There are two reasons. First, it is more informative and usually of scientific interest to predict generating probabilities instead of binary outcomes. Second, given the understanding of predictive distribution for $p_s(\mathbf{x}_{n+1})$ in Theorem \ref{cor1}, the predictive uncertainty of $y_s(\mathbf{x}_{n+1})$ can be easily constructed by the use of bootstrap predictive distribution, which is given in the following remark. 

\begin{remark}
 The bootstrap predictive distribution of $y_s(\mathbf{x}_{n+1})$ follows $Bernoulli(p_s(\mathbf{x}_{n+1}))$, where the distribution of $p_s(\mathbf{x}_{n+1})$ can be obtained from Theorem \ref{cor1} (iii). The detailed algorithm can be found in the supplementary material S7.  
\end{remark}

\begin{remark}
For model \eqref{eq:simplemodel}, the results of Theorem \ref{thm2} and Theorem \ref{cor1} can be applied by assuming $s=1$ and $T=1$.  

\end{remark}

Although there is no closed form expression for the distribution of $\boldsymbol{p}|\boldsymbol{Y}$, the random samples from $\boldsymbol{p}|\boldsymbol{Y}$ can be easily generated by the Metropolis-Hastings (MH) algorithm.   For example, let $\{\boldsymbol{p}^{(j)}\}_{j=1,\ldots,J}$ be the $J$ random samples generated from distribution $\boldsymbol{p}|\boldsymbol{Y}$, then the MMSPE predictor of $p_s(\mathbf{x}_{n+1})$ in Theorem \ref{cor1} can be approximated using Monte Carlo method by
\[
\mathbb{E}_{\boldsymbol{p}|\boldsymbol{Y}}\left[\kappa(m(D_{n+1,s}),v(D_{n+1,s}))|\boldsymbol{Y}\right]\approx\frac{1}{J}\sum^J_{j=1}\kappa(m(D^{(j)}_{n+1,s}),v(D^{(j)}_{n+1,s})),
\]
where $D^{(j)}_{n+1,s}=\{\boldsymbol{p}^{(j)},\boldsymbol{Y}\}$. Similar idea can be applied to compute $\mathbb{V}\left[p_s(\mathbf{x}_{n+1})|\boldsymbol{Y}\right]$ and the predictive quantiles. Details are given in the supplementary material S7.

Without the information of the underlying probabilities, the predictor does not interpolate all the training data as in Theorem \ref{thm2} (ii). From Theorem \ref{cor1}, when $\mathbf{x}_{n+1}=\mathbf{x}_i$, the predictor is still unbiased but the corresponding variance is nonzero. Instead, the variance becomes $\mathbb{V}_{\boldsymbol{p}|\boldsymbol{Y}}\left[p_s(\mathbf{x}_i)|\boldsymbol{Y}\right]$, which is due to the uncertainty of the underlying probability. The proof is similar to Theorem \ref{thm2} (ii). 
To show the empirical performance of the predictive distribution in Theorem \ref{cor1}, a one-dimensional example is illustrated in Figure \ref{fig:theorem_illustration}. Consider the true probability function, $p(x)=0.4\exp(-1.2x)\cos(3.5\pi x)+0.4$, which is represented by a black dotted line, and the training set that contains 12 evenly-spaced inputs and the corresponding binary outputs represented by red dots. The blue line is the MMSPE predictor constructed by equation (\ref{cor:mean}) and the gray region is the corresponding 95\% confidence band constructed by the 2.5\%- and 97.5\%-quantiles. 
It appears that the proposed predictor and the confidence band reasonably capture the underlying probability.

\begin{figure}[h]
\centering
\includegraphics[width=0.6\textwidth]{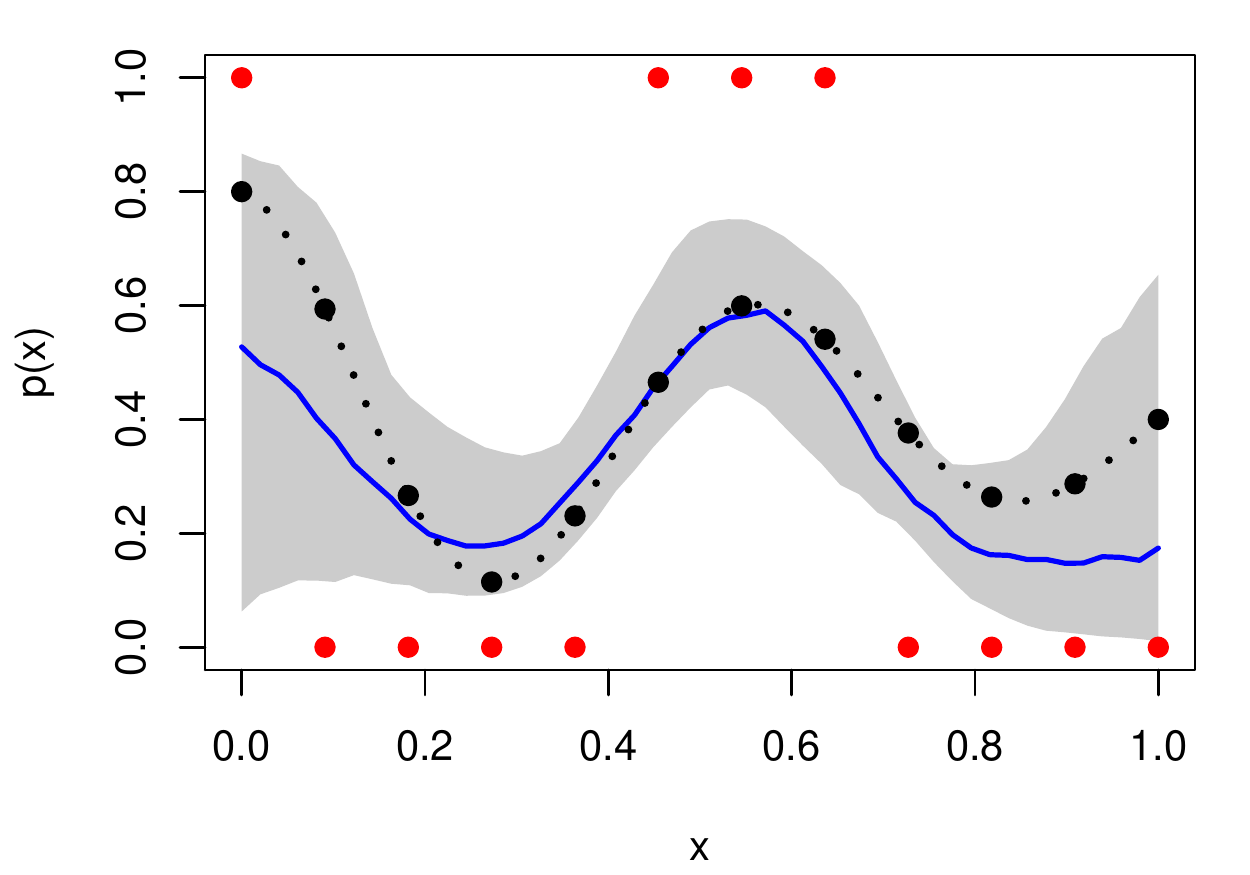}
\caption{Illustration of predictive distribution. Black dotted line represents the true probability function, red dots represent the binary response data, black dots represent the true probabilities at the chosen locations, and the emulator is represented by the blue line, with the gray shaded region providing a pointwise 95\% confidence band.}
\label{fig:theorem_illustration}
\end{figure}


When the historical time series for an untried setting (i.e., $y_1(\mathbf{x}_{n+1}),\ldots,y_{s-1}(\mathbf{x}_{n+1})$ in Theorem \ref{cor1}) is not available, we can emulate a completely new time series (or batch of time series) with input $\mathbf{x}_{n+1}$. The idea is to generate draws from the conditional distribution $p_s(\mathbf{x}_{n+1})|\boldsymbol{Y}$ for future outputs, starting from $s=0$, and take pointwise median of the random draws. This idea is similar to the dynamic emulators introduced by \cite{liu2009dynamic} for continuous outputs. 
The random samples from $p_s(\mathbf{x}_{n+1})|\boldsymbol{Y}$ can be generated by the fact $f(p_s(\mathbf{x}_{n+1}),\boldsymbol{p}|\boldsymbol{Y})=f(\boldsymbol{p}|\boldsymbol{Y})f(p_s(\mathbf{x}_{n+1})|\boldsymbol{p},\boldsymbol{Y})$, where $f(p_s(\mathbf{x}_{n+1})|\boldsymbol{p},\boldsymbol{Y})$ is a logit-normal distribution provided in Lemma \ref{thm1}. As mentioned above, the random samples from $f(\boldsymbol{p}|\boldsymbol{Y})$ can be generated through the MH algorithm. Therefore, generating a draw from $p_s(\mathbf{x}_{n+1})|\boldsymbol{Y}$ consists of two steps: (i) generating the ``previous'' probability values $\boldsymbol{p}^*$ given output $\boldsymbol{Y}$ from the distribution $\boldsymbol{p}|\boldsymbol{Y}$ through the MH algorithm, and (ii) based on the sample $\boldsymbol{p}^*$, draw a sample $p^*_s(\mathbf{x}_{n+1})$ from $p_s(\mathbf{x}_{n+1})|\boldsymbol{p}^*,\boldsymbol{Y}$, which is a logit-normal distribution, and also draw a sample $y^*_s(\mathbf{x}_{n+1})$ from a Bernoulli distribution with parameter $p^*_s(\mathbf{x}_{n+1})$. An explicit algorithm is given in the supplementary material S8.

\section{Simulation Studies}

In Section \ref{sec:simulation_estimation}, we conduct simulations generated from Gaussian processes to demonstrate the estimation performance. In Section \ref{sec:simulation_prediction}, the prediction performance is examined by comparing several existing methods using the data generated from a modified Friedman function \citep{friedman1991multivariate}.

\subsection{Estimation Performance}\label{sec:simulation_estimation}
 
Consider a 5-dimensional input space, $d=5$, and the input $\mathbf{x}$ is randomly generated from a regular grid on $[0,1]^5$. The binary output, $y_{t}(\mathbf{x})$ at time $t$, is simulated by a Bernoulli distribution with probability $p_t(\mathbf{x})$ calculated by \eqref{eq:model_GPBiTS} and $\alpha_0=0.5$, $\boldsymbol{\alpha}=(-3,2,-2,1,0.5)'$, $\varphi_1=0.8$, $\sigma^2=1$, and the power exponential correlation function \eqref{eq:powercorrelationfunction} is chosen with $\boldsymbol{\theta}=(0.5, 1.0, 1.5,  2.0, 2.5)'$ and $p=2$. Four sample size combinations of $n$ and $T$ are considered in the simulations. 


The potential confounding between the polynomials in the mean function and the zero-mean Gaussian process can lead to the lack of identifiability, which will cause the estimated mean model to lose interpretability. In order to tackle this problem, \cite{plumlee2016orthogonal} proposed an orthogonal Gaussian process model whose stochastic part is orthogonal to the mean function. The key idea is to construct the correlation function that achieves the orthogonality. The orthogonal correlation function derived from the exponential correlation functions with power $p=2$ is given in equation (8) of \cite{plumlee2016orthogonal}. We implemented the orthogonal correlation function (abbreviated as OGP) as well as the power exponential correlation function (abbreviated as PE) in the simulation.



The estimation results for the linear function coefficients are summarized in Table \ref{tab:simulation_estimation} based on 100 replicates for each sample size combination. In general, the proposed approach can estimate the linear function coefficients ($\alpha_0, \boldsymbol{\alpha}, \varphi_1$)  reasonably well. Compared with PE, the estimation improvement using OGP is reported as IMP. It appears that the estimation accuracy can be further improved by the use of orthogonal correlation functions. Therefore, \textit{orthogonal correlation functions are generally recommended} when estimation is of major interest, such as in variable selection and calibration problems.

\begin{table}[!h]
\centering 
{\begin{tabular}{cc|c|ccccccc}
\toprule
$n$ & $T$ & Corr & $\hat{\alpha}_0$ & $\hat{\alpha}_1$ & $\hat{\alpha}_2$ & $\hat{\alpha}_3$ & $\hat{\alpha}_4$ & $\hat{\alpha}_5$ & $\hat{\varphi}_1$ \\ 
\midrule
\multirow{5}{*}{200} & \multirow{5}{*}{20} &   \multirow{2}{*}{PE} & 0.46 & $-2.71$ & 1.82 & $-1.82$ & 0.91 & 0.46 & 0.72  \\
 & & & (0.11) & (0.15) & (0.13) & (0.12) & (0.09) & (0.10) & (0.11)\\
 &  & \multirow{2}{*}{OGP} &  0.48 & $-2.77$ & 1.84 & $-1.85$ & 0.91 & 0.46 & 0.71 \\
 & & & (0.12) & (0.12) & (0.10) & (0.09) & (0.08) & (0.08) & (0.10)\\
 & & IMP (\%) & 4 &2 &1& 1.5& 0&0& $-1.25$\\
\midrule
\multirow{5}{*}{200} & \multirow{5}{*}{50} & \multirow{2}{*}{PE} &   0.45 & $-2.68$ & 1.80 & $-1.79$ & 0.90 & 0.46 & 0.70 \\
 & & & (0.07) & (0.10) & (0.09) & (0.08) & (0.07) & (0.06) & (0.07)\\ 
  &   & \multirow{2}{*}{OGP} & 0.47 & $-2.75$ & 1.83 & $-1.83$ & 0.92 & 0.46 & 0.71 \\
 & & &(0.08) & (0.09) & (0.06) & (0.06) & (0.06) & (0.05) & (0.08)\\
 & & IMP (\%)& 4 &2.3 &1.5& 2& 2 &0& 1.25\\
\midrule
\multirow{5}{*}{500} & \multirow{5}{*}{20}& \multirow{2}{*}{PE}   & 0.47 & $-2.76$ & 1.82 & $-1.83$ & 0.91 & 0.48 & 0.73 \\
 & & & (0.09) & (0.14) & (0.12) & (0.10) & (0.08) & (0.07) & (0.07)\\ 
  &   &  \multirow{2}{*}{OGP}   &  0.49 & $-2.81$ & 1.89 & $-1.88$ & 0.95 & 0.46 & 0.74 \\
 & & & (0.08) & (0.09) & (0.07) & (0.06) & (0.06) & (0.06) & (0.07)\\
  & & IMP (\%)& 4 &1.7 &3.5& 2.5& 4 &4 & 1.25\\
\midrule
\multirow{5}{*}{500} & \multirow{5}{*}{50} &\multirow{2}{*}{PE}  & 0.45 & $-2.75$ & 1.83 & $-1.83$ & 0.92 & 0.45 & 0.74 \\
 & & & (0.06) & (0.07) & (0.06) & (0.06) & (0.06) & (0.05) & (0.04)\\ 
  &   & \multirow{2}{*}{OGP}  & 0.47 & $-2.80$ & 1.87 & $-1.87$ & 0.93 & 0.47 & 0.75 \\
 & & & (0.06) & (0.05) & (0.04) & (0.04) & (0.03) & (0.03) & (0.04)\\    & & IMP (\%)& 4 &1.7 &2& 2& 1 &2 & 1.25\\
\bottomrule
\end{tabular}}
\caption{Estimation of linear coefficients. The values are the average estimates over 100 replicates, while the values in parentheses are the standard deviation of the estimates. The parameter settings are $\alpha_0=0.5,\alpha_1=-3,\alpha_2=2,\alpha_3=-2,\alpha_4=1,\alpha_5=0.5$, and $\varphi_1=0.8$.} 
\label{tab:simulation_estimation}
\end{table}

The parameter estimation results for $\sigma^2$ and PE correlation parameters $\boldsymbol{\theta}$ are reported in Table \ref{tab:simulation_estimation_variance}. The estimation with OGP has similar results, so we omit them to save space.
The proposed approach tends to overestimate the correlation parameters for small sample size. This is not surprising because the estimation of correlation parameters is more challenging and the same phenomenon is observed in conventional GP models (see \cite{li2005analysis}). 
This problem can be ameliorated by the increase of sample size as shown in Table \ref{tab:simulation_estimation_variance}. Given the same number of total sample size, ($n=200, T=50$) and ($n=500, T=20$), it appears that a larger $n$ can improve the estimation accuracy more effectively. 

\begin{table}[!h]
\centering
\begin{tabular}{cc|cccccc}
\toprule
$n$ & $T$  & $\hat{\theta}_1$ & $\hat{\theta}_2$ & $\hat{\theta}_3$ & $\hat{\theta}_4$ & $\hat{\theta}_5$ & $\hat{\sigma}^2$ \\ 
\midrule
\multirow{2}{*}{200} & \multirow{2}{*}{20} &  0.86 & 1.80 & 2.35 & 3.30 & 4.10 & 0.82 \\
 & & (0.81) & (1.13) & (1.41) & (1.76) & (1.85) &  (0.07)\\
\midrule
\multirow{2}{*}{200} & \multirow{2}{*}{50} &  0.65 & 1.55 & 2.38 & 3.01 & 3.80 & 0.79 \\
 & & (0.16) & (0.63) & (1.12) & (1.25) & (1.49) &  (0.05)\\ 
\midrule
\multirow{2}{*}{500} & \multirow{2}{*}{20} & 0.61 & 1.17 & 1.93 & 2.66 & 3.24 &  0.87 \\
 & & (0.16) & (0.25) & (0.54) & (0.96) & (1.17) &  (0.05) \\ 
\midrule
\multirow{2}{*}{500} & \multirow{2}{*}{50} & 0.57 & 1.16 & 1.78 & 2.37 & 3.11 & 0.87 \\
 & & (0.08) & (0.16) & (0.35) & (0.39) & (0.68) & (0.03)\\ 
\bottomrule
\end{tabular}
\caption{Estimation of correlation parameters and variance. The values are the average estimates over 100 replicates, while the values in parentheses are the standard deviation of the estimates. The parameter settings are $\theta_1=0.5,\theta_2=1.0,\theta_3=1.5,\theta_4=2,\theta_5=2.5$, and $\sigma^2=1$.}
\label{tab:simulation_estimation_variance}
\end{table}

Based on the construction of predictive distribution in Section 4, we can emulate a new time series with an untried input. Here we generate 100 random untried inputs to examine its prediction performance. The prediction performance is evaluated by the following two measures. 
Define the 100 random untried inputs ($n_{\text{test}}=100$) by $\mathbf{x}^*_1,\ldots,\mathbf{x}^*_{100}$.
Since the underlying probabilities are known in the simulation settings, we evaluate the prediction performance by the root mean squared prediction error 
\[
\text{RMSPE}=\left(\frac{1}{n_{\text{test}}T}\sum^{n_{\text{test}}}_{i=1}\sum^T_{t=1}(p_t(\mathbf{x}^*_i)-\hat{p}_t(\mathbf{x}^*_i))^2\right)^{1/2},
\]
where $\hat{p}_t(\mathbf{x}^*_i)$ is the predictive probability.
The results are given in Table \ref{tab:simulation_RMSE}. Overall, the proposed predictor has the root mean squared prediction error less than $0.12$. Also, with the increase of sample size, the prediction error decreases in general.

\begin{table}[h]
\centering
\begin{tabular}{c|cccc}
\toprule
&$n=200$ & $n=200$  & $n=500$ & $n=500$ \\ 
&$T=20$ & $T=50$ & $T=20$ & $T=50$ \\ 
\midrule
\multirow{2}{*}{RMSPE} & 0.1188 & 0.1193 &0.1058  &  0.1053 \\
 & (0.0066) & (0.0058) & (0.0060) & (0.0045)\\
\bottomrule 
\end{tabular}
\caption{The comparison of RMSPEs.}
\label{tab:simulation_RMSE}
\end{table}

Furthermore, the predictive distributions can be used to quantify the prediction uncertainty. The predictive distributions with two random untried inputs are shown in Figure \ref{pred}, where the green dotted lines represent the true probability, the red dashed lines represent the MMSPE predictors obtained in Theorem \ref{cor1}. From Figure \ref{pred}, it appears that the MMSPE predictors provide accurate predictions in both cases. Moreover, the predictive distributions provide rich information for statistical inference. For example, we can construct $95\%$ predictive confidence intervals for the two untried settings as indicated in blue in Figure \ref{pred}.

\begin{figure}[h]
\centering 
\includegraphics[width=0.48\textwidth]{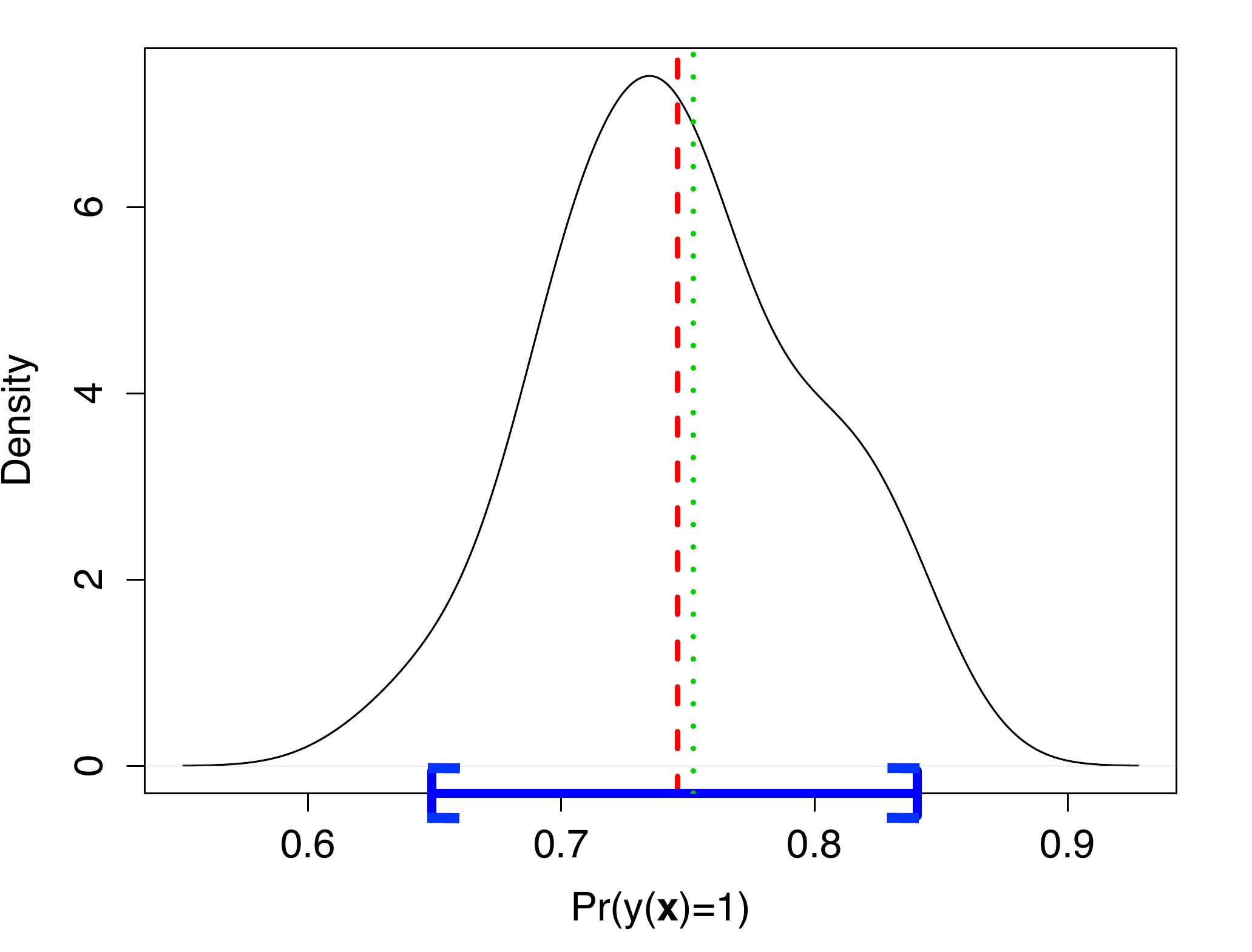}
\includegraphics[width=0.48\textwidth]{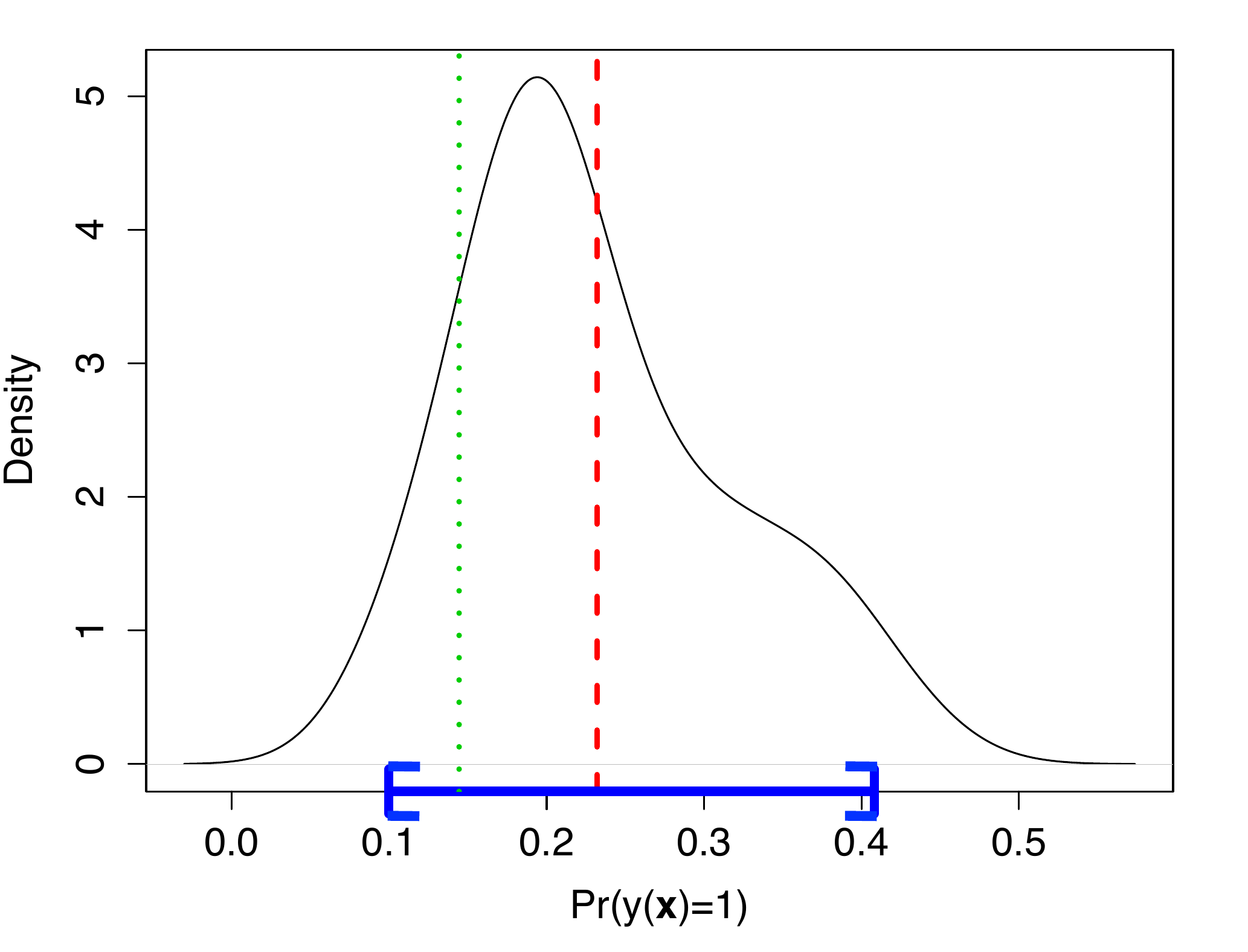}
\caption{Predictive distributions. The green dotted lines are the true probability, the red dashed lines are the MMSPE predictors, and the 95\% predictive confidence intervals are indicated in blue.}\label{pred}
\end{figure}

\subsection{Prediction Performance}\label{sec:simulation_prediction}

To examine the performance of the proposed model as an emulator, we compare its prediction accuracy with four existing methods: (1) the logistic regression model,  (2) a combination of logistic regression model with time series mean function, (3) the Bayesian generalized Gaussian process model \citep{williams1998bayesian}, which incorporates a Gaussian process prior but does not take into account the time series structure, and (4) the functional Gaussian process proposed by \cite{shi2011gaussian}, which captures the serial correlation by functional data analysis techniques. These methods are respectively implemented by \texttt{R} \citep{R} using packages \texttt{binaryGP} \citep{CL2017}, \texttt{stat} \citep{R}, a modification of \texttt{stat}, \texttt{kernlab} \citep{Rkernlab} and \texttt{GPFDA} \citep{RGPFDA} adapted to classification. 



The simulated data are generated by a modification of the Friedman function \citep{friedman1991multivariate},
\[
{\rm{logit}}(p_t(\mathbf{x}))=y_{t-1}(\mathbf{x}) + \frac{1}{3}\left[10\sin(\pi x_1x_2)+20(x_3-0.5)^2+10x_4+5x_5\right]-5,
\]
where $\mathbf{x}\in [0,1]^5$ and the Friedman function is given in the brackets with intercept $-5$ and scale $1/3$ to ensure $p_t(\mathbf{x})$ is uniformly located at $[0,1]$.
The input $\mathbf{x}$ is randomly generated from $[0,1]^5$ and the corresponding binary output $y_t(\mathbf{x})$ at time $t$ is generated by a Bernoulli distribution with probability $p_t(\mathbf{x})$.
The size of the training data is set to be $n=200,T=20$.

Since the underlying probabilities are known in this simulation setting, the prediction performance is evaluated by RMSPE using 100 randomly generated untried settings $(n_{\rm test}=100, T=20)$. The results for the five methods based on 100 replicates are shown in the left panel of Figure \ref{fig:simulation_prediction}. In general, the proposed method has lower RMSPE than the other four methods. By incorporating a Gaussian process to model the nonlinearity, the proposed method outperforms the straightforward combination of logistic regression model and time series structure. On the other hand, comparing with Bayesian generalized Gaussian process model (i.e.,  \texttt{kernlab} in Figure \ref{fig:simulation_prediction}), the proposed method further improves the prediction accuracy by taking into account the time series structure. 
The computation time, for model fitting and prediction, is given in the right panel of Figure \ref{fig:simulation_prediction}. The proposed method is faster than \texttt{GPFDA}. Comparing with \texttt{glm} and \texttt{glm\_ts}, the major computational difficulty lies in the estimation of correlation parameters, which is a common issue in conventional GPs because there is no analytical solution for the parameter estimation. The  \texttt{kernlab} has better computational performance since it assumes that all the correlation parameters are the equal and estimated by analytic approximation, that is, $\theta_1=\ldots,\theta_d$ in \eqref{eq:powercorrelationfunction}. However, the computation time of \texttt{kernlab} is expected to increase if this assumption is relaxed to a correlation function with different correlation parameters as in \eqref{eq:powercorrelationfunction}.

\begin{figure}[h]
\centering 
\includegraphics[width=0.48\textwidth]{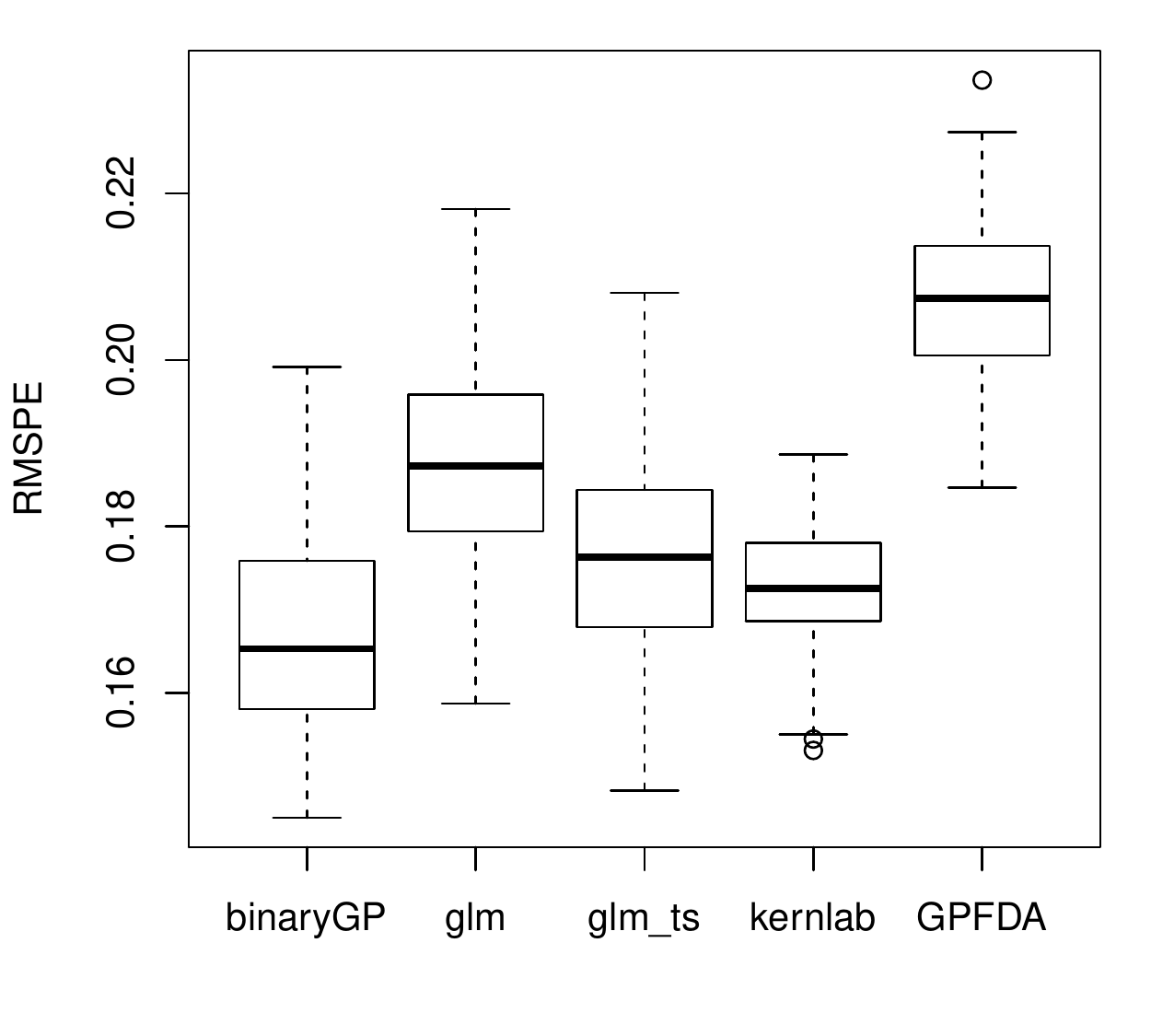}
\includegraphics[width=0.48\textwidth]{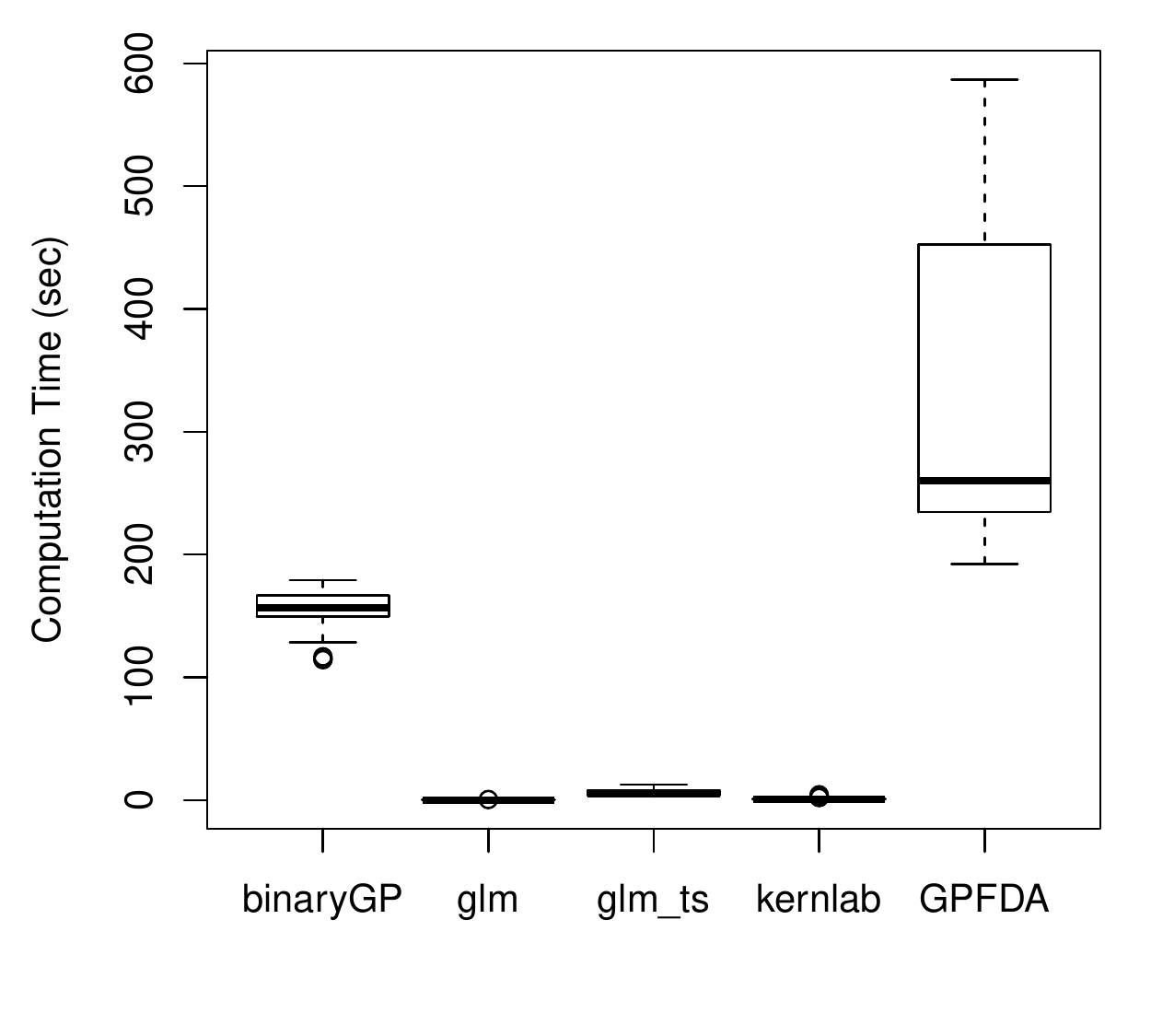}
\caption{Comparison of prediction performance in terms of accuracy (left) and computation time (right). binaryGP: proposed method, glm: logistic regression, glm\_ts: logistic regression with time-series mean function, kernlab: Bayesian generalized GP, and GPFDA: functional Gaussian process model.}\label{fig:simulation_prediction}
\end{figure}

\section{Computer Experiments for Cell Adhesion Frequency Assay}
In an earlier study based on \textit{in vitro} experiments, an important memory effect was discovered in the repeated adhesion experiments of the micropipette adhesion frequency assay. However, only limited variables of interest can be studied in the lab because of the technical complexity of the biological setting and the complicated experimental manipulation. Therefore, computer simulation experiments are performed to examine the complex mechanisms behind repeated receptor-ligand binding to rigorously elucidate the biological mechanisms behind the memory effect.  

In these computer experiments, two surfaces are simulated to reflect the two opposing membranes in the adhesion frequency assays. The molecules on the surfaces are permitted to interact for the contact duration and then separated for a period of waiting time to simulate the retract-approach phase of the assays. 
The computer experiments are constructed based on a kinetic proofreading model for receptor modification and solved through a Gillespie algorithm \citep{gillespie1976general}, which is a stochastic simulation algorithm. The contact is scored as 1 or 0 depending on whether at least one bond or no bond is observed, respectively. The process is repeated until the given number of contacts is completed. 
  
The biological system investigated here is the T Cell Receptor (TCR) binding to antigen peptide bound to Major Histocompatibility Complex (pMHC), which has previously been shown to exhibit memory in repeated contacts \citep{Zarnitsyna2007PNAS}. The TCR is the primary molecule involved in detecting foreign antigens which are presented on pMHC molecules expressed by infected cells. Memory in serial interactions of these foreign antigens may be a mechanism which underlies the major properties of T cell antigen recognition: sensitivity, specificity, and context discrimination. It has largely remained uninvestigated due to the small time scales at which the mechanism operates and the complexity of the experimental system. Although there are many possible cellular mechanisms which may induce this behavior, we investigate a specific mechanism, called free induction mechanism \citep{huang2010kinetics}, in this study as to how this memory may be controlled: pMHC binding to a single TCR within a cluster upregulates the kinetics of all TCRs within that cluster.


The free induction mechanism has six control variables given in Table \ref{realdata}. The range of each control variable in Table \ref{realdata} is given by physical principles or estimated through similar molecular interactions. The design of the control variables is a 60-run OA-based Latin hypercube designs \citep{Tang1993}.  
For each run, it consists of 50 replicates and each replicate has 100 repeated contacts ($T=100$).
 
\begin{table}[h] 
\begin{tabular}{c|c|c}  
\toprule
Variable & Description & Range\\
\midrule
$x_{K_{f,p}}$ & on-rate enhancement of activated TCRs & (1,100)\\
$x_{K_{r,p}}$ & off-rate enhancement of activated TCRs & (0.1,100)\\
$x_{T_{half}}$ & half-life of cluster activation&(0.1,10)\\
$x_{Tc}$ & cell-cell contact time& (0.1,10)\\
$x_{Tw}$ & waiting time in between contacts&(0.1,10)\\
$x_{Kc}$ & kinetic proofreading modification rate for activation of cluster & (0.1,10)\\
\bottomrule
\end{tabular} \caption{Control variables in cell adhesion frequency assay experiments.}
  \label{realdata}
\end{table}

The proposed estimation method is implemented with orthogonal correlation functions derived from the power exponential correlation function with $p=2$, which can be found in equation (8) in \cite{plumlee2016orthogonal}. We start with a large model in which the mean function includes all the main effects of the control variables and their interactions with the past time series output $y_{t-1}$. 
The model is written as: 
\begin{align*}
\text{logit}(p_{t}(\mathbf{x}))=-0.07+&\hat{\varphi}_1y_{t-1}(\mathbf{x})+\hat{\alpha}_1{x}_{K_{f,p}}+\hat{\alpha}_2{x}_{K_{r,p}}+\hat{\alpha}_3{x}_{T_{half}}+
\hat{\alpha}_4{x}_{Tc}+\hat{\alpha}_5{x}_{Tw}+\hat{\alpha}_6{x}_{Kc}+\\
&(\hat{\gamma}_1{x}_{K_{f,p}}+\hat{\gamma}_2{x}_{K_{r,p}}+\hat{\gamma}_3{x}_{T_{half}}+\hat{\gamma}_4{x}_{Tc}+\hat{\gamma}_5{x}_{Tw}+\hat{\gamma}_6{x}_{Kc})y_{t-1}(\mathbf{x})+Z_t(\mathbf{x}),
\end{align*}
where all the control variables are standardized to $[0,1]$, $\hat{\sigma}=0.43$ and the estimated correlation parameters are $\hat{\boldsymbol{\theta}}=(\hat{\theta}_{K_{f,p}},\hat{\theta}_{K_{r,p}},\hat{\theta}_{T_{half}},\hat{\theta}_{Tc},\hat{\theta}_{Tw},\hat{\theta}_{Kc})=(3.28 , 1.70, 7.77 , 0.06 , 4.78, 0.74)$.
Estimation results for the mean function coefficients are given in Table \ref{table:estimation1} with p values calculated based on the asymptotic results in Theorem \ref{thm:asy_fixed}. We use these p values to perform variable selection and identify significant effects for the mean function. According to Table \ref{table:estimation1}, ${x}_{T_{half}}$ has no significant effect in the mean function at the $0.01$ level. By removing ${x}_{T_{half}}$, the model can be updated as
\begin{align*}
\text{logit}(p_{t}(\mathbf{x}))=&-0.07+0.14y_{t-1}(\mathbf{x})+0.13{x}_{K_{f,p}}+0.37{x}_{K_{r,p}}+0.47{x}_{Tc}-0.08{x}_{Tw}+0.15{x}_{Kc}\\
&+(0.16{x}_{K_{f,p}}+0.23{x}_{K_{r,p}}-0.09{x}_{T_{half}}+0.23{x}_{Tc}-0.17{x}_{Tw}+0.36{x}_{Kc})y_{t-1}(\mathbf{x})+Z_t(\mathbf{x}),
\end{align*}
where $\hat{\sigma}=0.44$, the estimated correlation parameters are $\hat{\boldsymbol{\theta}}=(\hat{\theta}_{K_{f,p}},\hat{\theta}_{K_{r,p}},\hat{\theta}_{T_{half}},\\\hat{\theta}_{Tc},\hat{\theta}_{Tw},\hat{\theta}_{Kc})=(3.27, 1.71, 7.77 , 0.06 , 4.81, 0.74)$. Based on the updated model, among the interaction effects, all the control variables except ${x}_{T_{half}}$ are significant for inducing memory in the free induction model.

\begin{table}[h]
\centering
\begin{tabular}{c|rcrc}
\toprule
   &  & Standard & &\\
 & Value &  deviation & $Z$ score & p value \\ 
\midrule
$\hat{\varphi}_1$ & 0.14 & 0.02 & 7.96 & 0.0000\\
$\hat{\alpha}_1$ & 0.13 & 0.02 & 5.96 & 0.0000\\
$\hat{\alpha}_2$ & 0.37 & 0.02 & 17.82 & 0.0000\\
$\hat{\alpha}_3$ & 0.00 & 0.02 & 0.08 & 0.9331\\
$\hat{\alpha}_4$ & 0.47 & 0.02 & 22.54 & 0.0000\\
$\hat{\alpha}_5$ & -0.08 & 0.02 & -3.79 & 0.0001\\
$\hat{\alpha}_6$ & 0.15 & 0.02 & 6.86 & 0.0000\\ 
$\hat{\gamma}_1$ & 0.16 & 0.03 & 5.2 & 0.0000\\
$\hat{\gamma}_2$ & 0.23 & 0.03 & 7.68 & 0.0000\\
$\hat{\gamma}_3$ & -0.09 & 0.03 & -2.92 & 0.0035\\
$\hat{\gamma}_4$ & 0.23 & 0.03 & 7.28 & 0.0000\\
$\hat{\gamma}_5$ & -0.17 & 0.03 & -5.47 & 0.0000 \\
$\hat{\gamma}_6$ & 0.36 & 0.03 & 11.83 & 0.0000\\
\bottomrule
\end{tabular} 
\caption{Estimation results.}
\label{table:estimation1}
\end{table}


The application of this statistical approach to the analysis of simulations and experimental data will be powerful in illuminating the unknown biological mechanism, and also informs the next round of experiments by advising future manipulations. Additionally, developments on the calibration of computer experiments based upon the proposed predictive distribution will help provide insight into the range of possible values of variables, such as the increases in kinetic rates, which are difficult to determine through existing methods due to the small time scale at which this mechanism operates and the limits of existing experimental techniques. 

Besides estimation, the proposed method also provides predictors which can serve as efficient and accurate emulators for untried computer experiments.  
The construction of emulator is an important step for future research on calibration where computer experiment outputs under the same settings of the lab experiments are required but not necessarily available.   
To assess the predictive performance, we compare the proposed method with the four existing methods discussed in Section 5 based on a 10-fold cross-validation study.
To evaluate the binary predictive performance, we consider four proper scoring rules defined in Table 1 of \cite{gneiting2007strictly}, including the Brier score, the spherical score, the logarithmic score, and the zero-one score. The results are reported in Table \ref{table:ScoringRule}, where larger values indicate better predictions. Compared to the four existing methods, the proposed method consistently has better prediction performance across the four scoring rules. The GPFDA approach appears to be the second best, but it is usually time-consuming to evaluate as we observed in Figure \ref{fig:simulation_prediction}. In addition, the 10-fold variation is relatively large as shown in Figure \ref{fig:ScoringRule} using the zero-one scoring rule.

\begin{table}[h]
\centering
\begin{tabular}{c|rrrrr}
\toprule
Scoring rule  & binaryGP & glm & glm\_ts & kernlab & GPFDA\\
\midrule
Brier & \textbf{-0.2092}&-0.2193&-0.2199&-0.2197&-0.2145\\
Spherical & \textbf{0.7632}&0.7499&0.7490&0.7491&0.7545\\
Logarithmic & \textbf{-0.6099}&-0.6310&-0.6323&-0.6323&-0.6166\\
Zero-one ($c=1/2$) & \textbf{0.3688}&0.3517&0.3542&0.3246&0.3650\\
\bottomrule
\end{tabular} 
\caption{Comparison of prediction performance using proper scoring rules. The values are the medians over the 10 folds. Larger value indicates better prediction. binaryGP: proposed method, glm: logistic regression, glm\_ts: logistic regression with time-series mean function, kernlab: Bayesian generalized GP, and GPFDA: functional Gaussian process model.}
\label{table:ScoringRule}
\end{table}

\begin{figure}[h]
\centering 
\includegraphics[width=0.5\textwidth]{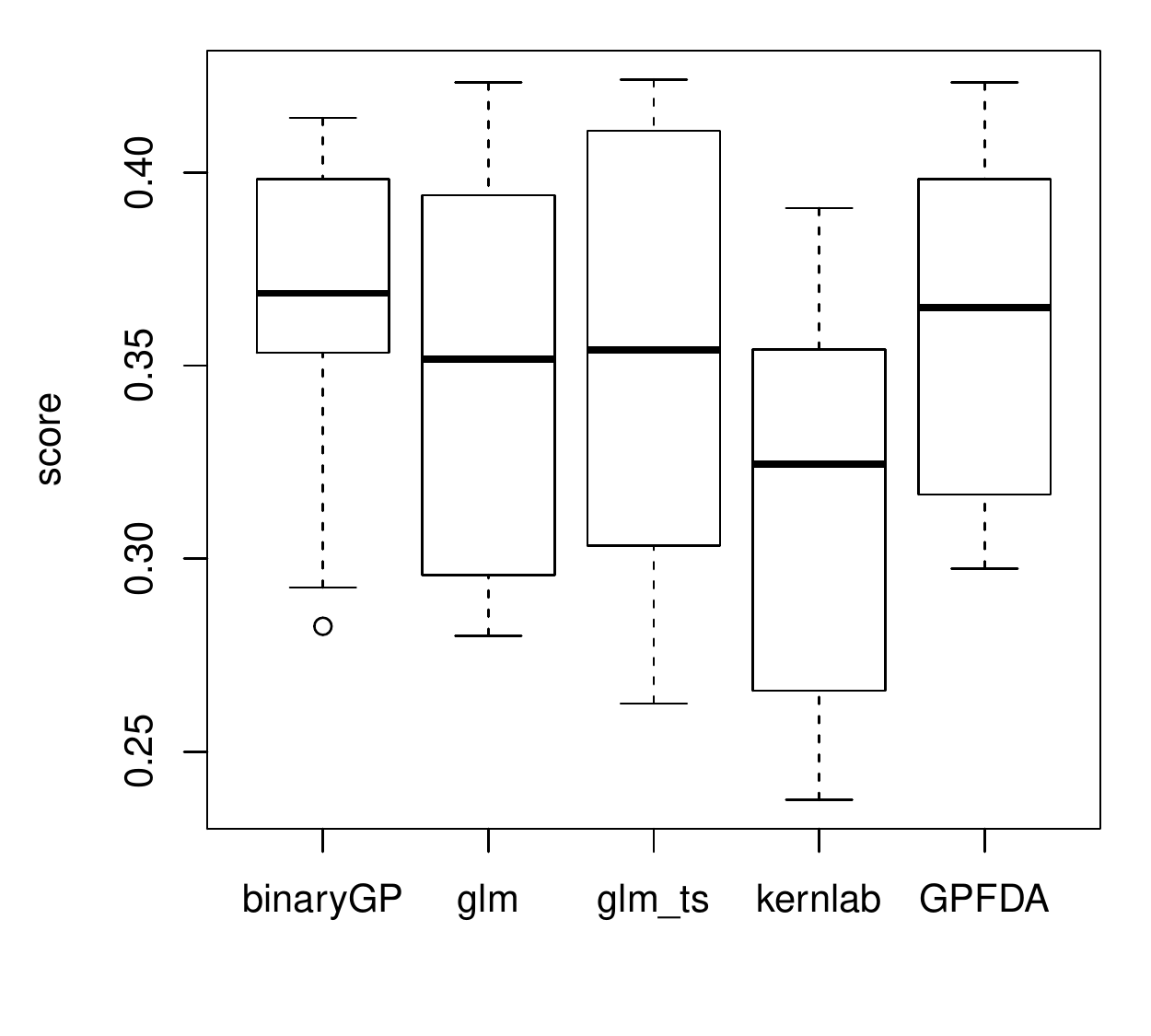}
\caption{Comparison of prediction performance using zero-one score ($c=1/2$). binaryGP: proposed method, glm: logistic regression, glm\_ts: logistic regression with time-series mean function, kernlab: Bayesian generalized GP, and GPFDA: functional Gaussian process model.}\label{fig:ScoringRule}
\end{figure}


\section{Summary and Concluding Remarks}

In spite of the prevalence of Gaussian process models in the analysis of computer experiments, their applications are limited to the Gaussian assumption on the responses. Motivated by the study of cell adhesion where the computer simulation responses are binary time series, a generalized Gaussian process model is proposed in this paper. The estimation procedure is introduced and asymptotic properties are derived. An optimal predictor and its predictive distribution are constructed which can be used for uncertainty quantification and calibration of future computer simulations. An \texttt{R} package is available for implementing the proposed methodology. The methodology is applied to analyze stochastic computer simulations for a cell adhesion mechanism. The results reveal important biological information which is not available in lab experiments and provide valuable insights on how the next round of lab experiments should be conducted.

The current work can be extended in several directions. First, we will extend the proposed method to other non-Gaussian data, such as the count data. It is conceivable that the current estimation procedure can be directly extended to other exponential family distributions, but different predictive distributions are expected for different types of non-Gaussian data. Second, the computational cost in the proposed procedure can be further reduced. In particular, the inversion of $\boldsymbol{R}_{\boldsymbol{\theta}}$ can be computationally prohibitive when sample size is large.
This computational issue has been addressed for conventional GP models in the recent literature. Extensions of these methods (e.g., \cite{GA2015,Sea2016}) to binary responses deserve further attention. Third, many mathematical models underlying the computer simulations contain unknown parameters, which need to be estimated using data from lab experiments. This problem is called calibration and much work has been done in the computer experiment literature. However, the existing methods (e.g., \cite{kennedy2001bayesian}, \cite{tuo2015calibration} and \cite{gramacy2015calibrating}) are only applicable under the Gaussian assumption. Based upon the model and prediction procedure proposed herein, we will work on developing a calibration method for non-Gaussian data.

\vspace{5mm}
\noindent{\bf Supplementary Materials} The assumptions for Theorems \ref{thm:asy_fixed} and \ref{thm:asy_reml}, the proofs of Theorems \ref{thm:asy_fixed}, \ref{thm:asy_reml}, \ref{thm1}, \ref{thm2}, and the algorithms for estimation and emulation are given in an online supplement.

\vspace{5mm}
\if1\blind{\noindent \textbf{Acknowledgements}: The authors gratefully acknowledge helpful advice from the associate editor and two referees. This work was supported by NSF DMS 1660504 and 1660477.}
\fi

\begin{appendices}

\section{Algorithm: Estimation of $(\boldsymbol{\beta},\boldsymbol{\omega}$)}\label{append:estimationalgorithm}
\begin{algorithmic}[1]
  \State Set initial values $\boldsymbol{\omega}=(\sigma^2,\boldsymbol{\theta})=\mathbf{1}_{d+1}, \boldsymbol{\beta}=\mathbf{1}_m,p_{it}=1$, and set $\tilde{\eta}_{it}=\log\frac{p_{it}}{1-p_{it}}+\frac{y_{it}-p_{it}}{p_{it}(1-p_{it})}$ for each $i$ and $t$.
  \Repeat
     \Repeat
     	\State Set $\boldsymbol{W}$ as an $N\times N$ diagonal matrix with diagonal elements $W_{it}=p_{it}(1-p_{it})$
	 	\State Set $\boldsymbol{V}=\boldsymbol{W}^{-1}+\sigma^2(\boldsymbol{R}_{\boldsymbol{\theta}}\otimes I_T)$ 
        \State Update $\boldsymbol{\beta}=(\boldsymbol{X}'\boldsymbol{V}^{-1}\boldsymbol{X})^{-1}\boldsymbol{X}'\boldsymbol{V}^{-1}\tilde{\boldsymbol{\eta}}$
       \State Set $\boldsymbol{Z}=\sigma^2(\boldsymbol{R}_{\boldsymbol{\theta}}\otimes I_T)\boldsymbol{V}^{-1}(\tilde{\boldsymbol{\eta}}-\boldsymbol{X}'\boldsymbol{\beta})$
       \State Update $p_{it}=\left(\frac{\exp\{\boldsymbol{X}'\boldsymbol{\beta}+\boldsymbol{Z}\}}{\mathbf{1}_N+\exp\{\boldsymbol{X}'\boldsymbol{\beta}+\boldsymbol{Z}\}}\right)_{it}$ and $\tilde{\eta}_{it}=\log\frac{p_{it}}{1-p_{it}}+\frac{y_{it}-p_{it}}{p_{it}(1-p_{it})}$ for each $i$ and $t$
     \Until $\{\tilde{\eta}_{it}\}_{it}$ converges
    \State Update $\boldsymbol{\omega}=\arg\min_{\boldsymbol{\omega}}L(\boldsymbol{\omega})$, where $L(\boldsymbol{\omega})$ is the negative log-likelihood function (12)
    \State Update $(\sigma^2,\boldsymbol{\theta})=\boldsymbol{\omega}$
    \Until $\boldsymbol{\beta}$ and $\boldsymbol{\omega}$ converge
	\State Return $\boldsymbol{\beta}$ and $\boldsymbol{\omega}$
\end{algorithmic}

\section{Assumptions}\label{append:assumption}

\begin{enumerate}
\item The parameter $\boldsymbol{\beta}$ belongs to an open set $B\subseteq \mathbb{R}^m$ and the parameter $\boldsymbol{\omega}$ belongs to an open set $\Omega\subseteq \mathbb{R}^{d+1}$.
\item The model matrix $X_{it}$ lies almost surely in a nonrandom compact subset of $\mathbb{R}^m$ such that $Pr(\sum^n_{i=1}\sum^T_{t=1}X'_{it}X_{it}>0)=1$.
\end{enumerate}
For any matrix $A$, define $\|A\|\equiv\sqrt{\text{tr}(A'A)}$; for the covariance matrix $\boldsymbol{V}(\boldsymbol{\omega})$, define $V_i(\boldsymbol{\omega})\equiv\partial\boldsymbol{V}(\boldsymbol{\omega})/\partial \omega_i$ and $V_{ij}(\boldsymbol{\omega})\equiv\partial\boldsymbol{V}(\boldsymbol{\omega})/\partial \omega_i\partial\omega_j$; for $\boldsymbol{\omega}\in\Omega$, denote $\stackrel{u}{\longrightarrow}$ as uniform convergence of nonrandom functions over compact subsets of $\Omega$.
\begin{enumerate}
\setcounter{enumi}{2}
\item  $J_N(\boldsymbol{\omega})P_N(\boldsymbol{\omega})^{-1}\stackrel{d}{\longrightarrow}W(\boldsymbol{\omega})$ for some nonsingular $W(\boldsymbol{\omega})$, which is continuous in $\boldsymbol{\omega}$, where  $P_N(\boldsymbol{\omega})=\text{diag}(\|\varPi(\boldsymbol{\omega})V_1(\boldsymbol{\omega})\|,\ldots,\|\varPi(\boldsymbol{\omega})V_{d+1}(\boldsymbol{\omega})\|)$ and $\varPi(\boldsymbol{\omega})=\boldsymbol{V}(\boldsymbol{\omega})^{-1}-\boldsymbol{V}(\boldsymbol{\omega})^{-1}\boldsymbol{X}(\boldsymbol{X}'\boldsymbol{V}(\boldsymbol{\omega})^{-1}\boldsymbol{X})^{-1}\boldsymbol{X}'\boldsymbol{V}(\boldsymbol{\omega})^{-1}$.
\item If there exists a sequence $\{r_N\}_{N\geq 1}$ with $\lim\sup_{N\rightarrow\infty}r_N/N\leq 1-\delta$, for some $\delta\in(0,1)$, such that for any compact subset $K\subseteq\Omega$, there exist constants $0<C_1(K)<\infty$ and $C_2(K)>0$ such that
\begin{equation*}
\lim\sup_{N\rightarrow\infty}\max\{|\lambda_N|,|\lambda^i_N|,|\lambda^{ij}_N|:1\leq i,j\leq k \}<C_1(K)<\infty
\end{equation*}
and 
\begin{equation*}
\lim\sup_{N\rightarrow\infty}\min\{|\lambda_1|,|\lambda^i_{r_N}|:1\leq i\leq k \}>C_2(K)>0,
\end{equation*}
uniformly in $\boldsymbol{\omega}\in K$, where $|\lambda_1|\leq\ldots\leq|\lambda_N|$ are the absolute eigenvalues of $\boldsymbol{V}(\boldsymbol{\omega})$, $|\lambda^i_1|\leq\ldots\leq|\lambda^i_N|$ are the absolute eigenvalues of $V_i(\boldsymbol{\omega})$, and $|\lambda^{ij}_1|\leq\ldots\leq|\lambda^{ij}_N|$ are the absolute eigenvalues of $V_{ij}(\boldsymbol{\omega})$.
\end{enumerate}

Assumption 2 holds when the row vectors of $\boldsymbol{X}$ are linear independent. Thus, if only the linear effect is considered in the mean function, then orthogonal designs or orthogonal array-based designs, such as OA-based Latin hypercube designs \citep{Tang1993}, can be chosen for sampling schemes. The conditions for Assumption 4 can be referred to \cite{cressie1996asymptotics}, in which the checkable conditions for rectangular lattice of data sites and irregularly located data sites are given. For instance, for rectangular lattice of data sites, with certain correlation functions, a sufficient condition is choosing data locations whose minimum distance is sufficiently large. More details can be seen in \cite{cressie1996asymptotics}. Thus, space-filling designs, such as Latin hypercube designs \citep{mckay1979comparison} and maximin distance designs \citep{johnson1990minimax}, can be chosen for sampling schemes.

\section{Proof of Theorem 3.1}\label{append:asy_fixed}
The model (4) can be seen as a binary time series model with random effects by multiplying an identity matrix on $\boldsymbol{Z}$, that is,
\[
\text{logit}(\boldsymbol{p})=\boldsymbol{X}\boldsymbol{\beta}+I_N\boldsymbol{Z},\quad\boldsymbol{Z}\sim\mathcal{N}(\mathbf{0}_N,\Sigma(\boldsymbol{\omega})),
\]
where $I_N$ and $\boldsymbol{Z}$ are viewed as the model matrix and coefficients of random effects, respectively. Therefore, if the variance-covariance parameters are given, inference of $\boldsymbol{\beta}$ is a special case of the binary time series model with random effects in \cite{hung2012binary}. Therefore, following Theorem 1 in \cite{hung2012binary}, the score function $S_N(\boldsymbol{\beta},\boldsymbol{\omega})$ is asymptotically normally distributed. 

\section{Proof of Theorem 3.3}\label{append:asy_reml}
According to \cite{breslow1993approximate}, one can view the inference on the variance-variance component as an iterative procedure for the linear mixed model
\[
\tilde{\boldsymbol{\eta}}=\boldsymbol{X}\boldsymbol{\beta}+I_N\boldsymbol{Z}+\epsilon,\quad \epsilon\sim\mathcal{N}(\mathbf{0}_N,\boldsymbol{W}^{-1})
\]
with the iterative weight $\boldsymbol{W}^{-1}$. Thus, it is a special case of the Gaussian general linear model in \cite{cressie1993asymptotic} with response vector $\tilde{\boldsymbol{\eta}}$ and variance-covariance component $\Sigma(\boldsymbol{\omega})+\boldsymbol{W}^{-1}$ with parameters $\boldsymbol{\omega}$. Since the asymptotic distribution of REML estimators for the variance-covariance parameters has been shown in \cite{cressie1993asymptotic} for a Gaussian general linear model, the result directly follows as a special case of Corollary 3.3 in \cite{cressie1993asymptotic}. Note that Assumption 4 in the supplementary material \ref{append:assumption} implies the conditions for Corollary 3.3 in \cite{cressie1993asymptotic}. See the proof of Theorem 2.2 in \cite{cressie1996asymptotics}.

\section{Proof of Lemma 4.1}\label{append:thm1}
We start the proof by deriving the conditional distribution from a simple model (1) (without time-series), and then extend the result to prove Lemma 4.1. First, a definition and a lemma about multivariate log-normal distribution are in order.

\begin{definition}\label{def:lognormal}
Suppose $\boldsymbol{\xi}=(\xi_1,\ldots,\xi_n)'$ has a multivariate normal distribution with mean $\boldsymbol{\mu}_n$ and covariance variance $\boldsymbol{\Sigma}_{n\times n}$. Then $\boldsymbol{b}=\exp\{\boldsymbol{\xi}\}$ has a \textit{multivariate log-normal distribution}. Denote it as $\boldsymbol{b}\sim \mathcal{LN}(\boldsymbol{\mu}_n, \boldsymbol{\Sigma}_{n\times n})$.
\end{definition}

\begin{lemma}\label{lemma1}
Suppose $\boldsymbol{b}^n$ and $b_{n+1}$ have a multivariate log-normal distribution
\[
\left(\begin{array}{c}
\boldsymbol{b}^n\\ 
b_{n+1}
\end{array}\right)\sim\mathcal{LN}\left(\left(\begin{array}{c}
\boldsymbol{\mu}^n \\ 
\mu_{n+1}
\end{array}\right) ,
\left[\begin{matrix}
\boldsymbol{\Sigma}_{n\times n} &  \boldsymbol{r} \\
\boldsymbol{r}'  & \sigma^2_{n+1} 
\end{matrix}\right] \right).
\]
The conditional distribution of $b_{n+1}$ given $\boldsymbol{b}^n$ is
$b_{n+1}|\boldsymbol{b}^n \sim \mathcal{LN}(\mu^*,v^*)$, where
$\mu^*=\mu_{n+1}+\boldsymbol{r}'\boldsymbol{\Sigma}_{n\times n}^{-1}(\log\boldsymbol{b}^n-\boldsymbol{\mu}^n)$ and $v^*=\sigma^2_{n+1}-\boldsymbol{r}'\boldsymbol{\Sigma}_{n\times n}^{-1}\boldsymbol{r}.$
\end{lemma}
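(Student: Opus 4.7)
The plan is to reduce this to the standard conditional distribution formula for multivariate Gaussians via the defining log-transformation of the multivariate log-normal, so there is very little beyond bookkeeping to do.

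First I would set $\boldsymbol{\xi}^n = \log \boldsymbol{b}^n$ (component-wise) and $\xi_{n+1} = \log b_{n+1}$. By Definition S5.1 of the multivariate log-normal, the joint vector $(\boldsymbol{\xi}^{n\prime}, \xi_{n+1})'$ is multivariate normal with mean $(\boldsymbol{\mu}^{n\prime}, \mu_{n+1})'$ and block covariance matrix with blocks $\boldsymbol{\Sigma}_{n\times n}$, $\boldsymbol{r}$, $\boldsymbol{r}'$, and $\sigma^2_{n+1}$. Moreover, since the coordinate-wise exponential is a bijection on $\mathbb{R}^n$ (with range $(0,\infty)^n$), the $\sigma$-algebra generated by $\boldsymbol{b}^n$ coincides with that generated by $\boldsymbol{\xi}^n$, so conditioning on $\boldsymbol{b}^n$ is the same as conditioning on $\boldsymbol{\xi}^n$.

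Next I would invoke the classical Gaussian conditioning identity: for a jointly normal partition with the block structure above,
\[
\xi_{n+1} \mid \boldsymbol{\xi}^n \sim \mathcal{N}\bigl(\mu_{n+1} + \boldsymbol{r}' \boldsymbol{\Sigma}_{n\times n}^{-1}(\boldsymbol{\xi}^n - \boldsymbol{\mu}^n),\; \sigma^2_{n+1} - \boldsymbol{r}' \boldsymbol{\Sigma}_{n\times n}^{-1} \boldsymbol{r}\bigr).
\]
Substituting $\boldsymbol{\xi}^n = \log \boldsymbol{b}^n$ gives a conditional mean of $\mu^* = \mu_{n+1} + \boldsymbol{r}' \boldsymbol{\Sigma}_{n\times n}^{-1}(\log \boldsymbol{b}^n - \boldsymbol{\mu}^n)$ and conditional variance $v^* = \sigma^2_{n+1} - \boldsymbol{r}' \boldsymbol{\Sigma}_{n\times n}^{-1}\boldsymbol{r}$, exactly as claimed.

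Finally, because $b_{n+1} = \exp\{\xi_{n+1}\}$ and the (conditional) law of $\xi_{n+1}$ given $\boldsymbol{b}^n$ is $\mathcal{N}(\mu^*, v^*)$, Definition S5.1 applied conditionally yields $b_{n+1} \mid \boldsymbol{b}^n \sim \mathcal{LN}(\mu^*, v^*)$. There is no real obstacle here; the only subtle point worth noting explicitly is the measurability argument that $\sigma(\boldsymbol{b}^n) = \sigma(\boldsymbol{\xi}^n)$, which justifies swapping the conditioning variable before applying the Gaussian formula. The non-singularity of $\boldsymbol{\Sigma}_{n\times n}$ (implicit in writing $\boldsymbol{\Sigma}_{n\times n}^{-1}$) should be stated as a standing assumption at the start of the proof.
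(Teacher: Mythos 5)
Your proposal is correct, and it reaches the result by a cleaner, higher-level route than the paper. The paper's proof works entirely at the density level: it writes out the joint density of the multivariate log-normal via the Jacobian of the exponential map, forms the conditional density as a ratio, and then invokes the partitioned matrix inverse identity to split the quadratic form $\left(\log\boldsymbol{b}^{n+1}-\boldsymbol{\mu}^{n+1}\right)'\boldsymbol{\Sigma}^{-1}_{(n+1)\times(n+1)}\left(\log\boldsymbol{b}^{n+1}-\boldsymbol{\mu}^{n+1}\right)$ into a term depending on $b_{n+1}$ and a remainder absorbed into the normalizing constant, finally matching the result against the log-normal density. You instead transfer the problem to the Gaussian scale by noting that conditioning on $\boldsymbol{b}^n$ is the same as conditioning on $\boldsymbol{\xi}^n=\log\boldsymbol{b}^n$ (since the componentwise exponential is a bijection), apply the classical Gaussian conditioning identity, and push the answer back through the definition of the log-normal. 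The two arguments carry the same mathematical content --- the partitioned-inverse computation in the paper is exactly the standard derivation of the Gaussian conditional --- but yours outsources that computation to a known result and replaces the density bookkeeping with a one-line measurability observation; the paper's version is more self-contained and makes the proportionality steps explicit. Your remark that the invertibility of $\boldsymbol{\Sigma}_{n\times n}$ should be stated is a fair point that the paper also leaves implicit.
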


\begin{proof}
Using transformation of a standard normal distribution, one can show that the joint probability density function of the multivariate log-normal distribution $\boldsymbol{b}^n$ is
\begin{equation*}\label{eq:multilognormpdf}
g_{\boldsymbol{b}^n}(b_1,\ldots,b_n)=\frac{1}{(2\pi)^{n/2}|\boldsymbol{\Sigma}_{n\times n}|^{1/2}}\frac{1}{\prod^n_{i=1}b_i}\exp\{-\frac{1}{2}\left(\log\boldsymbol{b}^n-\boldsymbol{\mu}_n\right)'\boldsymbol{\Sigma}_{n\times n}^{-1}\left(\log\boldsymbol{b}^n-\boldsymbol{\mu}_n\right)\}.
\end{equation*}
Denote $\boldsymbol{b}^{n+1}=(b_1,\ldots,b_n,b_{n+1})$, $\boldsymbol{\mu}^{n+1}=(\mu_1,\ldots,\mu_n,\mu_{n+1})$ and 
\begin{displaymath}
\boldsymbol{\Sigma}_{(n+1)\times (n+1)}= 
\left[\begin{matrix}
\boldsymbol{\Sigma}_{n\times n} &  \boldsymbol{r} \\
\boldsymbol{r}'  & \sigma_{n+1} 
\end{matrix}\right].
\end{displaymath}
Then, the conditional probability density function of $b_{n+1}$ given $\boldsymbol{b}^n$ can be derived as 
\begin{align*}
g_{b_{n+1}|\boldsymbol{b}^n}(b_{n+1}|\boldsymbol{b}^n)
	&\propto g(b_1,\ldots,b_n,b_{n+1})\\
	&\propto\frac{1}{b_{n+1}}\exp\{-\frac{1}{2}\left(\log\boldsymbol{b}^{n+1}-\boldsymbol{\mu}_{n+1}\right)'\boldsymbol{\Sigma}^{-1}_{(n+1)\times (n+1)}\left(\log\boldsymbol{b}^{n+1}-\boldsymbol{\mu}_{n+1}\right)\}.
\end{align*}
Let $\boldsymbol{a}_1=\log\boldsymbol{b}^n-\boldsymbol{\mu}^{n}$ and
$\boldsymbol{a}_2=\log b^{n+1}-\mu^{n+1}$. Applying the partitioned matrix inverse results (page 99 of \cite{harville1997matrix}) gives
\begin{align*}
&\left(\log\boldsymbol{b}^{n+1}-\boldsymbol{\mu}^{n+1}\right)'\boldsymbol{\Sigma}^{-1}_{(n+1)\times (n+1)}\left(\log\boldsymbol{b}^{n+1}-\boldsymbol{\mu}^{n+1}\right)\\
=&
\left[\begin{matrix}\boldsymbol{a}'_1 & \boldsymbol{a}'_2\end{matrix}\right]
\left[\begin{matrix}
\boldsymbol{\Sigma}_{n\times n} &  \boldsymbol{r} \\
\boldsymbol{r}'  & \sigma_{n+1} 
\end{matrix}\right]^{-1}
\left[\begin{matrix}\boldsymbol{a}_1 \\ \boldsymbol{a}_2\end{matrix}\right]\\
=&(\boldsymbol{a}_2-\boldsymbol{r}'\boldsymbol{\Sigma}^{-1}_{n\times n}\boldsymbol{a}_1)'\sigma^{-1}_{22\cdot 1}(\boldsymbol{a}_2-\boldsymbol{r}'\boldsymbol{\Sigma}^{-1}_{n\times n}\boldsymbol{a}_1)+\boldsymbol{a}'_1\boldsymbol{\Sigma}^{-1}_{n\times n}\boldsymbol{a}_1\\
=&(\boldsymbol{a}_2-\boldsymbol{r}'\boldsymbol{\Sigma}^{-1}_{n\times n}\boldsymbol{a}_1)^2/\sigma_{22\cdot 1}+\boldsymbol{a}'_1\boldsymbol{\Sigma}^{-1}_{n\times n}\boldsymbol{a}_1,
\end{align*}
where $\sigma_{22\cdot 1}=\sigma^2_{n+1}-\boldsymbol{r}'\boldsymbol{\Sigma}^{-1}_{n\times n}\boldsymbol{r}$ and is a real number.

Thus, the conditional probability density function of $b_{n+1}$ given $\boldsymbol{b}^n$ can be simplified as
\begin{align*}
g_{b_{n+1}|\boldsymbol{b}^n}(b_{n+1}|\boldsymbol{b}^n)&\propto \frac{1}{b_{n+1}}\exp\{-\frac{1}{2\sigma_{22\cdot 1}}(\boldsymbol{a}_2-\boldsymbol{r}'\boldsymbol{\Sigma}^{-1}_{n\times n}\boldsymbol{a}_1)^2-\frac{1}{2}\boldsymbol{a}'_1\boldsymbol{\Sigma}^{-1}_{n\times n}\boldsymbol{a}_1\}\\
&\propto\frac{1}{b_{n+1}}\exp\{-\frac{1}{2\sigma_{22\cdot 1}}(\boldsymbol{a}_2-\boldsymbol{r}'\boldsymbol{\Sigma}^{-1}_{n\times n}\boldsymbol{a}_1)^2\}\\
&=\frac{1}{b_{n+1}}\exp\{-\frac{1}{2\sigma_{22\cdot 1}}\left(\log b_{n+1}-(\mu_{n+1}+\boldsymbol{r}'\boldsymbol{\Sigma}^{-1}_{n\times n}(\log\boldsymbol{b}^{n}-\boldsymbol{\mu}^{n}))\right)^2\}.
\end{align*}
Therefore, according to the probability density function of a log-normal distribution, we have $b_{n+1}|\boldsymbol{b}^n \sim \mathcal{LN}(\mu^*,v^*)$, where $\mu^*=\mu_{n+1}+\boldsymbol{r}'\boldsymbol{\Sigma}^{-1}_{n\times n}(\log\boldsymbol{b}^n-\boldsymbol{\mu}^{n})$
and 
$v^*=\sigma_{22\cdot 1}=\sigma^2_{n+1}-\boldsymbol{r}'\boldsymbol{\Sigma}_{n\times n}^{-1}\boldsymbol{r}$.
\end{proof}

\begin{lemma}\label{lemma2}
Consider the model (1) (without time-series), given $(p(\mathbf{x}_1),\ldots,p(\mathbf{x}_n))'=\boldsymbol{p}^n$, the conditional distribution of $p(\mathbf{x}_{n+1})$ is a logit-normal distribution, that is,
$
p(\mathbf{x}_{n+1})|\boldsymbol{p}^n\sim Logitnormal(m(\boldsymbol{p}^n), v(\boldsymbol{p}^n))
$
with
$$m(\boldsymbol{p}^n)=\mu(\mathbf{x}_{n+1})+\boldsymbol{r}_{\boldsymbol{\theta}}'\boldsymbol{R}_{\boldsymbol{\theta}}^{-1}(\log \frac{\boldsymbol{p}^n}{\mathbf{1}-\boldsymbol{p}^n}-\boldsymbol{\mu}^n)\quad{\text{and}}\quad
v(\boldsymbol{p}^n)=\sigma^2(1-\boldsymbol{r}_{\boldsymbol{\theta}}'\boldsymbol{R}_{\boldsymbol{\theta}}^{-1}\boldsymbol{r}_{\boldsymbol{\theta}}),$$ 
where $\boldsymbol{\mu}^n=(\mu(\mathbf{x}_1),\ldots,\mu(\mathbf{x}_n))'$, $\mu(\mathbf{x}_i)=\alpha_0+\mathbf{x}'_{i}\boldsymbol{\alpha}$, $\boldsymbol{r}_{\boldsymbol{\theta}}=(R_{\boldsymbol{\theta}}(\mathbf{x}_{n+1},\mathbf{x}_1),\ldots,R_{\boldsymbol{\theta}}(\mathbf{x}_{n+1},\mathbf{x}_n))'$, and $\boldsymbol{R}_{\boldsymbol{\theta}}=\{R_{\boldsymbol{\theta}}(\mathbf{x}_i,\mathbf{x}_j)\}$.
\end{lemma}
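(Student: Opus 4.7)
The plan is to reduce this to Lemma \ref{lemma1} via the natural bijection between $p(\mathbf{x})$ and its odds ratio $b(\mathbf{x})\equiv p(\mathbf{x})/(1-p(\mathbf{x}))$, so that $\log b(\mathbf{x})=\text{logit}(p(\mathbf{x}))=\alpha_0+\mathbf{x}'\boldsymbol{\alpha}+Z(\mathbf{x})$. Since $Z(\cdot)$ is a zero-mean Gaussian process with variance $\sigma^2$ and correlation $R_{\boldsymbol{\theta}}$, the vector $(\log b(\mathbf{x}_1),\ldots,\log b(\mathbf{x}_n),\log b(\mathbf{x}_{n+1}))'$ is jointly multivariate normal with mean $(\boldsymbol{\mu}^{n\prime},\mu(\mathbf{x}_{n+1}))'$ and covariance
\[
\sigma^2\left[\begin{matrix} \boldsymbol{R}_{\boldsymbol{\theta}} & \boldsymbol{r}_{\boldsymbol{\theta}} \\ \boldsymbol{r}_{\boldsymbol{\theta}}' & 1 \end{matrix}\right],
\]
which, by Definition \ref{def:lognormal}, means $(\boldsymbol{b}^n,b(\mathbf{x}_{n+1}))'$ is multivariate log-normal.

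Next I would apply Lemma \ref{lemma1} directly, with $\boldsymbol{\Sigma}_{n\times n}=\sigma^2\boldsymbol{R}_{\boldsymbol{\theta}}$, $\boldsymbol{r}=\sigma^2\boldsymbol{r}_{\boldsymbol{\theta}}$, and $\sigma^2_{n+1}=\sigma^2$. This yields
\[
b(\mathbf{x}_{n+1})\,\big|\,\boldsymbol{b}^n\;\sim\;\mathcal{LN}\!\left(\mu(\mathbf{x}_{n+1})+\boldsymbol{r}_{\boldsymbol{\theta}}'\boldsymbol{R}_{\boldsymbol{\theta}}^{-1}(\log\boldsymbol{b}^n-\boldsymbol{\mu}^n),\;\sigma^2(1-\boldsymbol{r}_{\boldsymbol{\theta}}'\boldsymbol{R}_{\boldsymbol{\theta}}^{-1}\boldsymbol{r}_{\boldsymbol{\theta}})\right),
\]
where the $\sigma^2$ factors in $\boldsymbol{r}$ and $\boldsymbol{\Sigma}_{n\times n}$ cancel in the conditional mean coefficient. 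Because the map $p\mapsto p/(1-p)$ is a smooth bijection from $(0,1)$ to $(0,\infty)$, the sigma-algebra generated by $\boldsymbol{p}^n$ equals that generated by $\boldsymbol{b}^n$, so the displayed conditional distribution is also the distribution of $b(\mathbf{x}_{n+1})$ given $\boldsymbol{p}^n$, and substituting $\log\boldsymbol{b}^n=\log(\boldsymbol{p}^n/(\mathbf{1}-\boldsymbol{p}^n))$ gives exactly the $m(\boldsymbol{p}^n)$ stated.

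To finish, I would push the distribution back through the inverse transform $p(\mathbf{x}_{n+1})=b(\mathbf{x}_{n+1})/(1+b(\mathbf{x}_{n+1}))$. Writing $b(\mathbf{x}_{n+1})=\exp\{X\}$ with $X\,|\,\boldsymbol{p}^n\sim\mathcal{N}(m(\boldsymbol{p}^n),v(\boldsymbol{p}^n))$, we have $p(\mathbf{x}_{n+1})=\exp\{X\}/(1+\exp\{X\})$, which matches the excerpt's definition of $\text{Logitnormal}(m(\boldsymbol{p}^n),v(\boldsymbol{p}^n))$ verbatim.

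The calculation is essentially bookkeeping once Lemma \ref{lemma1} is in hand; the only point that needs care is verifying that conditioning on $\boldsymbol{p}^n$ and on $\boldsymbol{b}^n$ generate the same information (measurability under a $C^\infty$ bijection) and confirming that the common $\sigma^2$ factor cancels in the conditional mean but is retained in the conditional variance. No heavy computation or novel technique is required, so I do not anticipate a real obstacle; the extension to the time-series model \eqref{eq:model_GPBiTS} needed for Lemma \ref{thm1} will proceed along the same lines by conditioning on $H_{s}$ and on $\boldsymbol{p}_s$, replacing $\mu(\mathbf{x}_i)$ with the full mean $(\boldsymbol{\mu}_s)_i$ that absorbs the autoregressive and interaction terms.
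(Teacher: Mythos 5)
Your proposal is correct and follows essentially the same route as the paper: exponentiate the Gaussian field to get a multivariate log-normal vector of odds ratios, apply Lemma \ref{lemma1} to obtain the conditional log-normal law of $b(\mathbf{x}_{n+1})$ given $\boldsymbol{b}^n$ (with the $\sigma^2$ factor cancelling in the mean coefficient exactly as you note), and transform back through $p = b/(1+b)$ to identify the logit-normal distribution. The only cosmetic difference is that the paper carries out the last step by writing the conditional density with the Jacobian $1/(1-p_{n+1})^2$ explicitly, whereas you invoke the definition of the logit-normal directly via $p = \exp\{X\}/(1+\exp\{X\})$; both are equivalent.
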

\begin{proof}
Let $\eta_i=\mu(\mathbf{x}_i)+Z(\mathbf{x}_i)$ and $b_i=\exp\{\eta_i\}=p(\mathbf{x}_i)/(1-p(\mathbf{x}_i))$ for $i=1,\ldots,n+1$. Since $(\eta_1,\ldots,\eta_n,\eta_{n+1})'\sim\mathcal{N}(\boldsymbol{\mu}^{n+1},\sigma^2\boldsymbol{R}^*_{\boldsymbol{\theta}})$, where $\boldsymbol{\mu}^{n+1}=((\boldsymbol{\mu}^{n})',\mu(\mathbf{x}_{n+1}))'$ and 
\[
\boldsymbol{R}^*_{\boldsymbol{\theta}}=\left[\begin{matrix}
\boldsymbol{R}_{\boldsymbol{\theta}} &  \boldsymbol{r}_{\boldsymbol{\theta}} \\
\boldsymbol{r}_{\boldsymbol{\theta}}'  & 1 
\end{matrix}\right],
\] we have $(b_1,\ldots,b_n,b_{n+1})'\sim \mathcal{LN}(\boldsymbol{\mu}^{n+1}, \sigma^2\boldsymbol{R}^*_{\boldsymbol{\theta}})$ by Definition \ref{def:lognormal}. Thus, using Jacobian of the transformation and Lemma \ref{lemma1}, we have
\begin{align*}
&g_{p(\mathbf{x}_{n+1})|p(\mathbf{x}_1),\ldots,p(\mathbf{x}_n)}(p_{n+1}|p_1,\ldots,p_n)\\
=& g_{b_{n+1}|b_1,\ldots,b_n}(\frac{p_{n+1}}{1-p_{n+1}}|\frac{p_1}{1-p_1},\ldots,\frac{p_n}{1-p_n})\frac{1}{(1-p_{n+1})^2}\\
\propto& \frac{1-p_{n+1}}{p_{n+1}}\exp\{-\frac{\left(\log \frac{p_{n+1}}{1-p_{n+1}}-(\mu(\mathbf{x}_{n+1})+\boldsymbol{r}_{\boldsymbol{\theta}}'\boldsymbol{R}_{\boldsymbol{\theta}}^{-1}(\log \frac{\boldsymbol{p}^n}{\mathbf{1}-\boldsymbol{p}^n}-\boldsymbol{\mu}^n))\right)^2}{2\sigma^2(1-\boldsymbol{r}_{\boldsymbol{\theta}}'\boldsymbol{R}_{\boldsymbol{\theta}}^{-1}\boldsymbol{r}_{\boldsymbol{\theta}})}\}\frac{1}{(1-p_{n+1})^2}\\
\propto& \frac{1}{p_{n+1}(1-p_{n+1})}\exp\{-\frac{\left(\log \frac{p_{n+1}}{1-p_{n+1}}-(\mu(\mathbf{x}_{n+1})+\boldsymbol{r}_{\boldsymbol{\theta}}'\boldsymbol{R}_{\boldsymbol{\theta}}^{-1}(\log \frac{\boldsymbol{p}^n}{\boldsymbol{1}-\boldsymbol{p}^n}-\boldsymbol{\mu}^n))\right)^2}{2\sigma^2(1-\boldsymbol{r}_{\boldsymbol{\theta}}'\boldsymbol{R}_{\boldsymbol{\theta}}^{-1}\boldsymbol{r}_{\boldsymbol{\theta}})}\}.
\end{align*}
Therefore, according to the probability density function of a logit-normal distribution, we have $p(\mathbf{x}_{n+1})|\boldsymbol{p}^n\sim Logitnormal(m(\boldsymbol{p}^n), v(\boldsymbol{p}^n))$.
\end{proof}


Similarly, the result of Lemma \ref{lemma2} can be extended to the general model (3). Given $\boldsymbol{Y}=(\mathbf{y}'_1,\ldots,\mathbf{y}'_T,y_{n+1,1},\ldots,y_{n+1,s-1})'$, at a fixed time-step $s$, $p_s(\mathbf{x}_i)$ can be seen to have the model (1) with mean function $\mu(\mathbf{x}_i,\boldsymbol{Y})=\sum^R_{r=1}\varphi_ry_{i,s-r}+\alpha_0+\mathbf{x}'_{i}\boldsymbol{\alpha}+\sum^L_{l=1}\boldsymbol{\gamma}_l\mathbf{x}_{i}y_{i,s-l}$. Thus, by Lemma \ref{lemma2}, denote $\boldsymbol{p}_s=(p_s(\mathbf{x}_1),\ldots,p_s(\mathbf{x}_n))'$, we have 
\[
p_s(\mathbf{x}_{n+1})|\boldsymbol{p}_s,\boldsymbol{Y}\sim Logitnormal(m(\boldsymbol{p}_s,\boldsymbol{Y}), v(\boldsymbol{p}_s,\boldsymbol{Y})),
\] where $m(\boldsymbol{p}_s,\boldsymbol{Y})=\mu(\mathbf{x}_{n+1},\boldsymbol{Y})+\boldsymbol{r}_{\boldsymbol{\theta}}'\boldsymbol{R}_{\boldsymbol{\theta}}^{-1}(\log \frac{\boldsymbol{p}_s}{\mathbf{1}-\boldsymbol{p}_s}-\boldsymbol{\mu}^n),\boldsymbol{\mu}^n=(\mu(\mathbf{x}_1,\boldsymbol{Y}),\ldots,\mu(\mathbf{x}_n,\boldsymbol{Y}))'$,
and 
$
v(\boldsymbol{p}_s,\boldsymbol{Y})=\sigma^2(1-\boldsymbol{r}_{\boldsymbol{\theta}}'\boldsymbol{R}_{\boldsymbol{\theta}}^{-1}\boldsymbol{r}_{\boldsymbol{\theta}}).$
By the fact that $Z_t(\mathbf{x})$ is independent over time, which implies $p_s(\mathbf{x})$ is independent of $p_t(\mathbf{x})$ for any $t\neq s$, $p_s(\mathbf{x}_{n+1})|D_{n+1,s}$ and $p_s(\mathbf{x}_{n+1})|\boldsymbol{p}_s,\boldsymbol{Y}$ have the same distribution. So, $p_s(\mathbf{x}_{n+1})|D_{n+1,s}\sim Logitnormal(m(D_{n+1,s}), v(D_{n+1,s}))$, where $m(D_{n+1,s})=m(\boldsymbol{p}_s,\boldsymbol{Y})$ and $v(D_{n+1,s})=v(\boldsymbol{p}_s,\boldsymbol{Y})$.

\section{Proof of Theorem 4.3}\label{append:thm2}

(i) First, one can show that if $(p_s(\mathbf{x}_{n+1}),D_{n+1,s})$ has a joint distribution for which the conditional mean of $p_s(\mathbf{x}_{n+1})$ given $D_{n+1,s}$ exists, then $\mathbb{E}\left[p(\mathbf{x}_{n+1})|D_{n+1,s}\right]$ is the minimum mean squared error predictor of $p(\mathbf{x}_{n+1})$. See Theorem 3.2.1 in \cite{santner2003design}. Thus, by the result of Lemma 4.1, we have the conditional mean $\mathbb{E}\left[p(\mathbf{x}_{n+1})|D_{n+1,s}\right]=\kappa(m(D_{n+1,s}),v(D_{n+1,s}))$ with variance $\mathbb{V}\left[p(\mathbf{x}_{n+1})|D_{n+1,s}\right]=\tau(m(D_{n+1,s}),v(D_{n+1,s}))$. 

\noindent(ii) If $\mathbf{x}_{n+1}=\mathbf{x}_i$ for $i=1,\ldots,n$, then $m(D_{n+1,s})=\log (p_s(\mathbf{x}_i)/(1-p_s(\mathbf{x}_i)))$ and $v(D_{n+1,s})=0$, which implies that $$\kappa(m(D_{n+1,s}),0)=\exp\{m(D_{n+1,s})\}/(1+\exp\{m(D_{n+1,s})\})=p_s(\mathbf{x}_i)$$ and $\tau(m(D_{n+1,s}),0)=0$ by using transformation of a normal distribution. Thus, by Theorem 4.3 (i), we have $
\mathbb{E}\left[p_s(\mathbf{x}_{n+1})|D_{n+1,s}\right]=p_s(\mathbf{x}_i)$ and $\mathbb{V}\left[p_s(\mathbf{x}_{n+1})|D_{n+1,s}\right]=0.$

\noindent(iii) Let $X\sim\mathcal{N}(m(D_{n+1,s}), v(D_{n+1,s}))$, $P=\exp\{X\}/(1+\exp\{X\})$, which has the distribution $Logitnormal(m(D_{n+1,s}), v(D_{n+1,s}))$, and $Q(q;D_{n+1,s})$ be the $q$-th quantile of $P$. Consider the function $f(x)=\log (x/(1-x))$. The derivative is $f'(x)=1/(x(1-x))$. Thus, for $0<x<1$ the derivative is positive and the $f(x)$ function is increasing in $x$. Then, 
\begin{align*}
&Pr\left\{P>Q(q;D_{n+1,s})\right\}=q\\
\Leftrightarrow&Pr\left\{\frac{\exp\{X\}}{1+\exp\{X\}}>Q(q;D_{n+1,s})\right\}=q\\
\Leftrightarrow&Pr\left\{f(\frac{\exp\{X\}}{1+\exp\{X\}})>f(Q(q;D_{n+1,s}))\right\}=q\\
\Leftrightarrow&Pr\left\{X>\log\frac{Q(q;D_{n+1,s})}{1-Q(q;D_{n+1,s})}\right\}=q\\
\end{align*}
\begin{align*}
\Leftrightarrow&Pr\left\{\frac{X-m(D_{n+1,s})}{\sqrt{v(D_{n+1,s})}}>\frac{1}{\sqrt{v(D_{n+1,s})}}\left(\log\frac{Q(q;D_{n+1,s})}{1-Q(q;D_{n+1,s})}-m(D_{n+1,s})\right)\right\}=q\\
\Leftrightarrow&\frac{1}{\sqrt{v(D_{n+1,s})}}\left(\log\frac{Q(q;D_{n+1,s})}{1-Q(q;D_{n+1,s})}-m(D_{n+1,s})\right)=z_q\\
\Leftrightarrow&Q(q;D_{n+1,s})=\frac{\exp\{m(D_{n+1,s})+z_q\sqrt{v(D_{n+1,s})}\}}{1+\exp\{m(D_{n+1,s})+z_q\sqrt{v(D_{n+1,s})}\}}.
\end{align*}

\section{Metropolis-Hastings Algorithm and Approximation for Theorem 4.4}\label{append:MH}
The Metropolis-Hastings (MH) algorithm for generating random samples from $\boldsymbol{p}|\boldsymbol{Y}$ is given as follows.
\begin{algorithmic}[1]
  \For {$j=1$ to $J$}
	 \State Set $N_s=nT+s-1$.
     \State Start with a zero vector $\boldsymbol{p}$ of size $N_s$.
	 \For {$k=1$ to $N_s$}
          \State Generate a random value $p^*_k$ from $Logitnormal(m(\boldsymbol{p}_{-k},\mathbf{y}_{-k}),v(\boldsymbol{p}_{-k},\mathbf{y}_{-k}))$.
          \State Generate an uniform random variable $U\sim Unif(0,1)$.
          \If {$U<\min\{1,\frac{f(y_{k}|p^*_{k})}{f(y_{k}|p_{k})}\}$}
          \State Set $\boldsymbol{p}=(p_1,\ldots,p^*_{k},\ldots,p_{N_s})$.
          \EndIf
     \EndFor	
	\State Set $\boldsymbol{p}^{(j)}=\boldsymbol{p}$
	\EndFor
	\State Return $\{\boldsymbol{p}^{(j)}\}_{j=1,\ldots,J}$.
\end{algorithmic}

In the algorithm, we first sample a value for the $k$-th component $p_k$ from the conditional distribution of $p_k$ given $p_j,y_j,j\neq k$, which is $Logitnormal(m(\boldsymbol{p}_{-k},\mathbf{y}_{-k}),v(\boldsymbol{p}_{-k},\mathbf{y}_{-k}))$, where 
$$m(\boldsymbol{p}_{-k},\mathbf{y}_{-k})=\mu_t(\mathbf{x}_i)-\sum_{k\neq j}\frac{Q_{kj}}{Q_{kk}} \left(\log\frac{p_k}{1-p_k}-\mu_t(\mathbf{x}_i)\right),\quad v(\boldsymbol{p}_{-k},\mathbf{y}_{-k})=\frac{\sigma^2}{Q_{kk}},$$
in which $\mu_t(\mathbf{x}_i)=\sum^R_{r=1}\varphi_ry_{i,t-r}+\mathbf{x}'_{i}\boldsymbol{\alpha}+\sum^L_{l=1}\boldsymbol{\gamma}_l\mathbf{x}_iy_{i,t-l}$ and $Q_{kj}$ is the $(k,j)$-element of $\boldsymbol{R}^{-1}_{\boldsymbol{\theta}}$.
Similar to \cite{zhang2002estimation}, we use the single-component MH algorithm, that is, to update only a single component at each iteration. Moreover, the proposed distribution $f(p_{k})$ is used for the single MH algorithm, so that the probability of accepting a new $p^*_{k}$ is the minimum of 1 and $\frac{f(p^*_{k}|y_{k})f(p_{k})}{f(p_{k}|y_{k})f(p^*_{k})}\left(=\frac{f(y_{k}|p^*_{k})}{f(y_{k}|p_{k})}\right)$. 

Based on the samples $\{\boldsymbol{p}^{(j)}\}_{j=1,\ldots,J}$, the mean, variance, and $q$-quantile of $p_s(\mathbf{x}_{n+1})|\boldsymbol{Y}$ can be respectively approximated by

$$\frac{1}{J}\sum^J_{j=1}\kappa(m(\boldsymbol{p}^{(j)},\boldsymbol{Y}),v(\boldsymbol{p}^{(j)},\boldsymbol{Y})),$$
\begin{align*}
\frac{1}{J}\sum^J_{j=1}\tau(&m(\boldsymbol{p}^{(j)},\boldsymbol{Y}),v(\boldsymbol{p}^{(j)},\boldsymbol{Y}))+\\
&\frac{1}{J-1}\sum^J_{j=1}\left[\kappa(m(\boldsymbol{p}^{(j)},\boldsymbol{Y}),v(\boldsymbol{p}^{(j)},\boldsymbol{Y}))^2-\frac{1}{J}\sum^J_{j=1}\kappa(m(\boldsymbol{p}^{(j)},\boldsymbol{Y}),v(\boldsymbol{p}^{(j)},\boldsymbol{Y}))\right],
\end{align*}
and the $q$-quantile of $\{p^{(j)}_s\}^J_{j=1}$, where  $p^{(j)}_s$ is generated from $Logitnormal(m(\boldsymbol{p}^{(j)},\boldsymbol{Y}),v(\boldsymbol{p}^{(j)},\boldsymbol{Y}))$. Similarly, the distribution of $y_s(\mathbf{x}_{n+1})|\boldsymbol{Y}$ can be approximated by the sample distribution of $\{y^{(j)}_s\}^J_{j=1}$, where $y^{(j)}_s$ is generated from a Bernoulli distribution with probability $p^{(j)}_s$.

\section{Algorithm: Dynamic Binary Emulator}\label{append:emulatecurve}

\begin{algorithmic}[1]
  \For {$j=1$ to $J$}
	 \State Set $N=nT$.
     \State Start with a zero vector $\boldsymbol{p}$ of size $N$.
     \For {$i=1$ to $N$}
          \State Generate a random value $p^*_k$ from $Logitnormal(m(\boldsymbol{p}_{-k},\mathbf{y}_{-k}),v(\boldsymbol{p}_{-k},\mathbf{y}_{-k}))$.
          \State Generate an uniform random variable $U\sim Unif(0,1)$.
          \If {$U<\min\{1,\frac{f(y_{k}|p^*_{k})}{f(y_{k}|p_{k})}\}$}
          \State Set $\boldsymbol{p}=(p_1,\ldots,p^*_{k},\ldots,p_{N})$.
          \EndIf
     \EndFor	
     \State Set $\boldsymbol{p}_{n+1}=\boldsymbol{p},\boldsymbol{Y}_{n+1}=\boldsymbol{Y}$, zero vectors $\boldsymbol{p}_{\text{new}}$ and $\mathbf{y}_{\text{new}}$ of size $T$.
     \For {$t=1$ to $T$}
	 \State Given $D_{n+1,t}=\{\boldsymbol{p}_{n+1},\boldsymbol{Y}_{n+1}\}$, draw a sample $p_t(\mathbf{x}_{n+1})$ from $Logitnormal(m(D_{n+1,t}),v(D_{n+1,t}))$, and then draw a sample $y_t(\mathbf{x}_{n+1})$ from a Bernoulli distribution with parameter $p_t(\mathbf{x}_{n+1})$. 
	 \State Update $\boldsymbol{p}_{n+1}=(\boldsymbol{p}_{n+1}',p_t(\mathbf{x}_{n+1}))'$, $\boldsymbol{Y}_{n+1}=(\boldsymbol{Y}_{n+1}',y_t(\mathbf{x}_{n+1}))'$, $(\boldsymbol{p}_{\text{new}})_t=p_t(\mathbf{x}_{n+1})$, and $(\mathbf{y}_{\text{new}})_t=y_t(\mathbf{x}_{n+1})$.
	 \EndFor	
	      
	\State Set $\boldsymbol{p}_{\text{new}}^{(j)}=\boldsymbol{p}_{\text{new}}$ and $\mathbf{y}_{\text{new}}^{(j)}=\mathbf{y}_{\text{new}}$.
	\EndFor
	\State Take pointwise median from $\{\boldsymbol{p}_{\text{new}}^{(j)}\}_{j=1,\ldots,J}$ and $\{\mathbf{y}_{\text{new}}^{(j)}\}_{j=1,\ldots,J}$.
\end{algorithmic}
\end{appendices}

\bibliography{bib}

\end{document}